  \def\cup{\cupprod}
  \def\cap{\capprod}
  \def\bigcup{\bigcupprod}
  \def\bigcupdisjoint{\mathop{\kern10pt\raisebox{4pt}{$\cdot$}\kern-12pt\bigcup}\limits}
	\let\wtilde\widetilde
\let\epsilon\varepsilon
\DeclareMathOperator{\Tr}{Tr}
\newcommand{\abs}[1]{\ensuremath{\left\vert #1 \right\vert}}
\newcommand{\ceiling}[1]{\ensuremath{\left\lceil #1 \right\rceil}}
\DeclareMathOperator{\supp}{supp}
\numberwithin{equation}{section}
\newtheoremstyle{ttheorem}%
       {1.8ex\@plus1ex}                
       {2.1ex\@plus1ex\@minus.5ex}      
       {\itshape}           
       {0pt}                   
       {\bfseries}          
       {.}                  
       {.5em}               
       {}                
\newtheoremstyle{ddefinition}%
       {1.8ex\@plus1ex}                
       {2.1ex\@plus1ex\@minus.5ex}      
       {}           
       {0pt}                   
       {\bfseries}           
       {.}                  
       {.5em}               
       {}                
\newtheoremstyle{rremark}%
       {1.8ex\@plus1ex}                
       {2.1ex\@plus1ex\@minus.5ex}      
       {\normalfont}        
       {0pt}                   
       {\bfseries}           
       {.}                  
       {.5em}               
       {}                   
\theoremstyle{ttheorem}
\newtheorem{theorem}{Theorem}[section]
\newtheorem{lemma}[theorem]{Lemma}
\newtheorem{proposition}[theorem]{Proposition}
\newtheorem{corollary}[theorem]{Corollary}
\theoremstyle{ddefinition}
\newtheorem{definition}[theorem]{Definition}
\theoremstyle{rremark}
\newtheorem{myremarks}[theorem]{Remarks}
\newtheorem{myexamples}[theorem]{Examples}
\newtheorem{example}[theorem]{Example}
\newenvironment{remarks}{\begin{myremarks}\begin{nummer}}%
    {\end{nummer}\end{myremarks}}
    {\end{nummer}\end{myexamples}}
\newcounter{numcount}
\newcommand{\labelnummer}{(\roman{numcount})}%
\providecommand{\showkeyslabelformat}[1]{\relax}        
\let\mysaveformat\showkeyslabelformat                   %
\def\myformat#1{\raisebox{-1.5ex}{\mysaveformat{#1}}}   %
\newenvironment{nummer}%
  {\let\curlabelspeicher\@currentlabel%
    \begin{list}{\textup{\labelnummer}}%
      {\usecounter{numcount}\leftmargin0pt%
        \topsep0.5ex\partopsep2ex\parsep0pt\itemsep0ex\@plus1\p@%
        \labelwidth2.5em\itemindent3.5em\labelsep1em%
      }%
    \let\saveitem\item%
    \def\item{\saveitem%
      \def\@currentlabel{\curlabelspeicher\kern.1em\labelnummer}}%
    \let\savelabel\label%
    \def\label##1{{\ifnum\thenumcount=1\let\showkeyslabelformat\myformat\fi\savelabel{##1}}%
										{\def\@currentlabel{\labelnummer}%
									 	\let\showkeyslabelformat\@gobble
									 	\savelabel{##1item}%
										}%
	   							}%
  }{\end{list}}%
\let\curlabelspeicher\@currentlabel%
    \let\saveitem\item%
    \def\item{\saveitem%
      \def\@currentlabel{\curlabelspeicher\kern.1em\labelnummer}}%
    \let\savelabel\label%
    \def\label##1{{\ifnum\thenumcount=1\let\showkeyslabelformat\myformat\fi\savelabel{##1}}%
										{\def\@currentlabel{\labelnummer}%
									 	\let\showkeyslabelformat\@gobble
									 	\savelabel{##1item}%
										}%
    							}%
\let\OldItem\item
\newcommand{\MyItem}[2][]{}%
\def\section{\@startsection{section}{1}%
  \z@{1.3\linespacing\@plus\linespacing}{.5\linespacing}%
  {\normalfont\bfseries\centering}}
\def\subsection{\@startsection{subsection}{2}%
  \z@{.8\linespacing\@plus.5\linespacing}{-1em}%
  {\normalfont\bfseries}}
\def\nlsubsection{\@startsection{subsection}{2}%
  \z@{.8\linespacing\@plus.5\linespacing}{.1ex}%
  {\normalfont\bfseries}}
\let\@afterindenttrue\@afterindentfalse%
\renewenvironment{proof}[1][\proofname]{\par \normalfont
  \topsep6\p@\@plus6\p@ \trivlist 
  \item[\hskip\labelsep\scshape
    #1\@addpunct{.}]\ignorespaces
}{%
  \qed\endtrivlist
}
\def\ps@firstpage{\ps@plain
  \def\@oddfoot{\normalfont\scriptsize \hfil\thepage\hfil
     \global\topskip\normaltopskip}%
  \let\@evenfoot\@oddfoot
  \def\@oddhead{
    \begin{minipage}{\textwidth}
      \normalfont\scriptsize
      \emph{\insertfirsthead}
    \end{minipage}}
  \let\@evenhead\@oddhead 
}
\def\insertfirsthead{}
\def\@cite#1#2{{%
 \m@th\upshape\mdseries[{#1}{\if@tempswa, #2\fi}]}}
\newcommand{\C}{\mathbb{C}}
\newcommand{\N}{\mathbb{N}}
\newcommand{\PP}{\mathbb{P}}
\newcommand{\R}{\mathbb{R}}
\newcommand{\Z}{\mathbb{Z}}
\renewcommand{\L}{\Lambda}
\renewcommand{\leq}{\leqslant}
\renewcommand{\geq}{\geqslant}
\providecommand{\wtilde}[1]{\widetilde{#1}}
\providecommand{\abs}[1]{\lvert#1\rvert}
\newcommand{\1}{1}
\newcommand{\upd}{\mathrm{d}}
\renewcommand{\d}{\upd}   
\newcommand{\hairspace}{\kern .04167em}
\newcommand{\beq}{\begin{equation}}
\newcommand{\eeq}{\end{equation}}
\newcommand{\E}{\mathbb E}
\def\clap#1{\hbox to 0pt{\hss#1\hss}}
\begin{document}


\title[Lifshitz tails for random perturbations of Laurent matrices]{Lifshitz tails for random diagonal perturbations of Laurent matrices}

\author[M.\ Gebert]{Martin Gebert}
\address[M.\ Gebert]{Mathematisches Institut,
  Ludwig-Maximilians-Universit\"at M\"unchen,
  Theresienstra\ss{e} 39,
  80333 M\"unchen, Germany}
\email{gebert@math.lmu.de}

\author[C.\ Rojas-Molina]{Constanza Rojas-Molina}
\address[C.\ Rojas-Molina]{Laboratoire AGM, Dept. de Math\'ematiques, CY Cergy Paris Universit\'e, 2 Av. Adolphe Chauvin, 95302 Cergy-Pontoise, France}
\email{crojasmo@cyu.fr}

\thanks{Version \today}

\maketitle

\begin{abstract} We study the Integrated Density of States of one-dimensional random operators acting on $\ell^2(\mathbb Z)$ of the form $T + V_\omega$ where $T$ is a Laurent (also called bi-infinite Toeplitz) matrix and $V_\omega$ is an Anderson potential generated by i.i.d. random variables. We assume that the operator $T$ is associated to a bounded, H\"older-continuous symbol $f$, that attains its minimum at a finite number of points. We allow for $f$ to attain its minima algebraically. The resulting operator $T$ is long-range with weak (algebraic) off-diagonal decay. We prove that this operator exhibits Lifshitz tails at the lower edge of the spectrum with an exponent given by the Integrated Density of States of $T$ at the lower spectral edge. The proof relies on generalizations of Dirichlet-Neumann bracketing to the long-range setting and a generalization of Temple's inequality to degenerate ground state energies.
\end{abstract}


\section{Introduction}

Recently, fractional Anderson operators of the form $H_{\omega,\alpha}=(-\Delta)^{\alpha/2}+V_\omega$, with $\alpha\in(0,2]$ and $V_\omega$ an Anderson potential generated by i.i.d. random variables, have been studied in both the discrete setting $\ell^2(\mathbb Z^d)$ and continuous setting $\rm{L}^2(\mathbb R^d)$, with $d\geq 1$. They have attracted interest for their connections to anomalous diffusion \cite{MR3787555,MR3916700,MR3600570,Riascos_2018}, the appearance of an asymptotic behavior for the Integrated Density of States known as fractional Lifshitz tails \cite{kaleta2019lifschitz,kaleta2019lifschitzb,GRM20} and for exhibiting weaker localization properties compared to the standard Anderson model in $d=1$ \cite{Padgett_Liaw_etal19}.

In \cite{GRM20} we studied the Integrated Density of states (IDS) $N$ of the fractional Anderson model $H_{\omega,\alpha}$ in the discrete setting, showing fractional Lifshitz tails at the bottom of the spectrum. Compared to the behavior of the IDS for the Anderson model $H_{\omega,2}= -\Delta + V_\omega$ at the bottom of the spectrum
\beq\label{eq:ltand} N(E)\sim e^{-E^{-\frac{d}{2}}},\quad E\searrow 0,
\eeq
called Lifshitz tails, the IDS of the fractional Anderson model $H_{\omega,\alpha}$, $\alpha\in(0,2)$ exhibits the behavior
\beq N(E)\sim e^{-E^{-\frac{d}{\alpha}}},\quad E\searrow 0,
\eeq
called \emph{fractional} Lifshitz tails where $\sim$ denotes double logarithmic asymptotics, see \eqref{double_log_asymp}. The factors $d/2$ and $d/\alpha$, respectively, are called Lifshitz exponents.
In \cite{GRM20}, our proof was based on the operator monotonicity of the map $x\mapsto x^s$, $s\in[0,1]$ and the standard Dirichlet-Neumann bracketing available for finite-volume restrictions of the discrete Laplacian. This result is in agreement with results obtained recently in the continuous setting using probabilistic arguments that are not directly applicable in the discrete setting \cite{kaleta2019lifschitz}.

In this note we consider a more general framework than in \cite{GRM20} by studying random diagonal perturbations of Laurent matrices (also called bi-infinite Toeplitz matrices) associated to a symbol $f\in L^\infty([-\pi,\pi])$ satisfying suitable regularity conditions. In our setting we consider operators of the form $H_\omega=T_f+V_\omega$ where $T_f$ is a long-range operator and $V_\omega$ is an Anderson potential generated by i.i.d. random variables. The off-diagonal decay of $T_f$ can be weak, as long as it is summable. The symbol $f$ associated to $T_f$ can have a finite number of minima, as long as the minima are attained algebraically. We show that in this case, the IDS $N$ of the operator $H_\omega$  exhibits Lifshitz tails behavior at the bottom of the spectrum with a Lifshitz exponent given by the behavior of the IDS of the free operator $T_f$,
\beq\label{eq:lt-laur} N(E)\sim e^{-\frac{1}{I_f(E)}},\quad E\searrow \inf f, \eeq
where $I_f$ is the IDS of the Laurent operator $T_f$ given by
\beq
I_f(E) = \frac{1}{2\pi}\big| \big\{ k\in[-\pi,\pi]: f(k) \leq E\big\}\big|
\eeq
and $|\cdot|$ denotes Lebesgue measure.

 Our proof is based on a novel Dirichlet-Neumann bracketing obtained recently in \cite{G20} for suitably defined finite-volume restriction of banded Toepliz matrices, on a generalization of Temple's inequality to degenerate ground state energies, and on the operator monotonicity of the map $x\mapsto x^s$, $s\in[0,1]$, generalizing the approach of \cite{GRM20}. In the case of Toeplitz matrices with exponentially decaying off-diagonal terms perturbed by an Anderson potential, it was shown in \cite{Kl98} using periodic approximations of the IDS, that the latter exhibits exhibits Lifshitz tails of the form \eqref{eq:lt-laur}. Our novel approach allows us to recover the results from \cite{Kl98} and treat more general cases, as the case where $T_f$ is the fractional Laplacian, and more general functions of the discrete negative Laplacian.

The article is organized as follows: in the next section we introduce the model and state the main result on Lifshitz tails for a random perturbation of a Laurent matrix. In Section \ref{s:upplow-f} we show upper and lower bounds on the symbol $f$ associated to the Toeplitz matrix which we are interested in. In Section \ref{s:dn-bracketing} we recall some results from \cite{G20} on the Dirichlet-Neumann bracketing for banded Toeplitz matrix and show that we can bound the IDS of our model above and below by the IDS of an auxiliary model consisting of banded Toeplitz matrices perturbed by an Anderson potential. In Sections \ref{s:upp} and \ref{s:low} we give the proofs of upper and lower bounds on the IDS of the auxiliary model and finally in Section \ref{s:proof} we bring all together to give a proof of Lifshitz tails for our model.

\section{Model and main result}\label{s:model}
We start this section by introducing the free operator that will later be perturbed by a random potential.

Let $\mathbb T=\mathbb R \diagup 2\pi\mathbb Z$ be the one-dimensional torus and consider a function $f\in \rm{L}^\infty(\mathbb T)$. We are interested in the Laurent (also called bi-infinite Toeplitz) operator $T_f$ associated to $f$ on $\ell^2(\Z)$. $T_f$ is defined, for $x\in \ell^2(\Z)$, by
\beq
(T_f x)_n = \sum_{m\in\Z} a_{m-n} x_m,
\eeq
where the sequence $\big(a_n\big)_{n\in\Z}$ consists of the Fourier coefficients of the function $f$, that is,
\beq
a_n=\frac 1 {2\pi} \int_{-\pi}^\pi \d k f(k) e^{-i k n}.
\eeq
The function $f$ is called the symbol of the operator $T_f$.

Note, moreover, that $T_f$ is unitarily equivalent to the operator $M_f$ given by multiplication by $f$ on $L^2(\mathbb T)$. More precisely, $T_f$ is diagonalized by the discrete Fourier transform $\mathcal F:\ell^2(\mathbb Z)\rightarrow\rm{L}^2(\mathbb T)$ given by $(\mathcal{F}u)(k)=\frac{1}{\sqrt{2\pi}} \sum_{n\in \mathbb Z} u(n)e^{-ink}$, where $u\in \ell^2(\mathbb Z)$ and $k\in \mathbb T$. The fact that $f\in {\rm L}^\infty(\mathbb Z)$ implies that $T_f$ is a bounded operator, as $M_f$ is a bounded operator and its spectrum, denoted by $\sigma(T_f)$, is given by the range of $f$, that is, $\sigma(T_f)=f(\mathbb T)$.

\noindent {\bf Assumptions:} 
\textit{We assume throughout that the symbol $f$ satisfies
\begin{itemize}
\item[(A1)] $f$ is real valued.
\item[(A2)] $f$ is $\nu$-H\"older-continuous for some $\nu>0$ and $f\in C_{\text{pw}}^1(\mathbb T)$, where $f\in C_{\text{pw}}^1(\mathbb T)$ means $f$ being piecewise continuously differentiable, i.e. continuously differentiable except at finitely many points. Note that $f$ is $\nu$-H\"older continuous at the points where it fails to be continuously differentiable.
\item[(A3)] $\displaystyle\min_{x\in \mathbb T} f(x) = 0$ and there exists $M\in\mathbb N$ such that the minimum of $f$ is attained at $M$ points $E_1,...,E_M$.
\item[(A4)] The minima are attained algebraically, i.e., for each $E_i$, $i=1,...,M$ given in (A3), there exists $\beta_i\geq \nu$, $i=1,...,M$ such that the following limit exists and is positive
\beq
\lim_{\substack{E\to E_i,\\ E\neq E_i}} \frac{f(E)}{|E - E_i|^{\beta_i}} >0.
\eeq
We define $b:=\displaystyle\max_{1\leq i\leq M} \beta_i$.
\end{itemize}
}
 Assumption (A1) implies that $T_f$ is a self-adjoint operator, i.e. $a_n =\overline{ a}_{-n}$ for all $n\in\Z$.
 Therefore, $T_f$ is a bounded self-adjoint operator.

We denote by $(\delta_n)_{n\in\mathbb Z}$ the canonical orthonormal base of $\ell^2(\mathbb Z)$.
Assumption (A2) on the regularity of the symbol $f$ implies that the matrix entries of $T_f$ satisfy (see Lemma \ref{lem:offdiag-decay})
\beq\label{eq:offdiag-decay1} \langle \delta_n,T_f\delta_m \rangle=\abs{a_{m-n}}\leq \frac{1}{\abs{m-n}^{1+\nu}} \quad,m, n \in\mathbb Z.\eeq
\begin{remarks}
\item
The case $f(k) = 2- 2 \cos(k)$ gives rise to the discrete negative one-dimensional Laplacian, i.e. $T_f=-\Delta$. The function $f(k)= (2- 2 \cos(k))^\alpha$ with $0<\alpha<1$ gives rise to the discrete fractional negative Laplacian, i.e., $T_f=(-\Delta)^\alpha$.
\item We can consider symbols of the form $f=\Phi((2- 2 \cos(k)))$, where $\Phi:[0,\infty)\rightarrow [0,\infty)$ is a complete Bernstein function satisfying $c_1\lambda^{\alpha/2}\leq \Phi(\lambda)\leq c_2 \lambda^{\alpha/2}$  for all $\lambda<\lambda_0$ and some $\alpha\in(0,2]$ and $c_1,c_2,\lambda_0>0$,
     as considered in \cite{kaleta2019lifschitzb} (see also \cite{Riascos_2018}). This gives rise to operators of the form $T_f=\Phi(-\Delta)$.
\item
The prime example satisfying (A1)\,--\,(A4) is  the symbol
\beq
f(x) = \prod_{i=1}^M \big(2-2\cos(x-E_i)\big)^{\alpha_i}
\eeq
for some distinct $E_1,...,E_M\in\mathbb T$ and some  $\alpha_1,...,\alpha_M >0$.
\item
Condition (A4) excludes that the function $f$ approaches zero as $e^{-\frac 1 x}$.
\end{remarks}

Next, we define a diagonal random perturbation of the operator $T_f$.
\begin{definition}
Given a symbol $f$ satisfying Assumptions (A1) -- (A4), we define the random Laurent operator $H_{f,\omega}$, acting on $\ell^2(\mathbb Z)$, by
\beq
H_{f} = T_f + V_\omega,
\eeq
where $T_f$ is a Laurent operator generated by the symbol $f$ and $V_\omega$ is an Anderson random potential of the form
\beq
V_\omega u(n)=\sum_{n\in\mathbb Z}\omega_n \vert \delta_n\rangle\langle\delta_n\vert
\eeq
with  $\omega:=(\omega_n)_{n\in\mathbb Z}\in{\mathbb R}^{\mathbb Z}$ being independent and identically distributed random variables distributed according to the Borel probability measure $\mathbb P=\bigotimes_{\mathbb Z} P_0$ on $\mathbb R^{\mathbb Z}$. The single-site probability measure $P_0$ is non trivial and we assume the infimum of $\supp P_0$ is zero. We denote the corresponding expectation by $\mathbb E(\cdot)$.
\end{definition}

The fact that $T_f$ is translation invariant and the assumptions on the random potential imply that  $H_{f}$ is an ergodic, bounded, self-adjoint operator and, by standard arguments, its spectrum $\sigma(H_{f})$ is deterministic (see e.g. \cite{MR1223779,MR2509110}). This, together with Assumption (A3) on the symbol $f$, and $\inf \supp P_0 = 0$ yields that  $\inf\sigma(H_{f})=\inf\sigma(T_f)=0$.


\subsection{The integrated density of states}\label{s:IDS}

Let $L\in\N$ and write $\L_L=[-L,L]\cap\mathbb Z$. We denote the restriction $T_{f,L}$ of $T_f$ to $\ell^2\big(\Lambda_L\big)$ with simple boundary conditions by $T_{f,L}= 1_{\L_L} T_f 1_{\L_L}$ where $1_S$ stands for the projection onto $\ell^2(S)\subset \ell^2(\Z)$ for $S\subset\Z$. We denote the eigenvalues of $T_{f,L}$ by $\lambda_1\leq ...\leq \lambda_{2L+1}$, $j=1,..,2L+1$ counting multiplicity and in non-decreasing order.

The Integrated Density of States (IDS) $I_f$ of $T_f$ is defined by
\beq\label{eq:IDS-Tf}
I_f(E):=\lim_{L\to\infty} \frac{\#\{j:\ \lambda_j\leq E\} }{2L+1}.
\eeq
The translation invariance of $T_f$ and the off-diagonal decay \eqref{eq:offdiag-decay1} implies that
this limit exists and
\beq
I_f(E)= \frac 1 {2\pi}\big| \big\{ k\in [-\pi,\pi]: \ f(k) \leq E\big\}\big|,
\eeq
where $|\cdot|$ denotes the Lebesgue measure of a set, see Proposition \ref{prop:IDS-Tf}. Assumption (A4) on the symbol $f$ implies that
\beq
I_f(E) = \frac 1 {2\pi} \big| \big\{ k\in [-\pi,\pi]: \ f(k) \leq E\big\}\big| \sim C  E^{\frac 1 b} + o(E^{\frac 1 b})
\eeq
as $E\searrow 0$ with $b = \displaystyle\max_{1\leq i \leq M} \beta_i$.

Introducing randomness, we are interested in the behaviour of the IDS at the lower edge of the spectrum.
As before, we consider the restriction $H_{f,L}$ of $H_f$ to $\ell^2\big(\L_L\big)$ with simple boundary conditions, given by $H_{f,L}= 1_{\L_L} H_f 1_{\L_L}$. We denote the eigenvalues of $H_{f,L}$ by
\beq
\mu_1\leq \mu_2\leq ... \leq \mu_{2L+1}.
\eeq
Now, the IDS  $N_f$ of $H_f$ takes the form
\beq
 N_f(E):=\lim_{L\to\infty} \frac{\#\{j:\ \mu_j\leq E\} }{2L+1}=\lim_{L\to\infty} \frac{ \Tr \big( 1_{\leq E}  (H_{f,L})\big)}{2L+1}
\eeq
To see that this limit exists, we note the following equivalent representation of the function $N_f$:
 \begin{lemma}\label{lem:IDS}
 Almost surely, the limit
\beq \label{eq:ids-alt}
\lim_{L\to\infty} \frac{\Tr\big( 1_{[-L,L]} 1_{\leq E} (H_{f}) \big)}{2L+1}
\eeq
exists and equals $N_f$ for all $E\in\mathbb R$.
\end{lemma}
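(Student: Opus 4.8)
The plan is to establish the existence of the limit \eqref{eq:ids-alt} via the standard ergodic-theorem argument for traces of ergodic operators, and then to identify it with $N_f$ by comparing the two finite-volume counting functions up to an error that is negligible after dividing by $2L+1$. For the first part, I would write $F_E(\omega) := \langle \delta_0, 1_{\leq E}(H_{f,\omega})\delta_0\rangle$ and note that, since $H_{f,\omega}$ is a bounded ergodic self-adjoint operator on $\ell^2(\mathbb Z)$ with respect to the shift group acting on $\Omega = \mathbb R^{\mathbb Z}$ with its product measure, the map $n \mapsto \langle \delta_n, 1_{\leq E}(H_{f,\omega})\delta_n\rangle = F_E(S^{-n}\omega)$ is a stationary sequence to which Birkhoff's pointwise ergodic theorem applies (for fixed $E$; one then extends to all $E$ by monotonicity in $E$ and density of a countable set of continuity points of $\mathbb E F_E$, the standard trick). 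This gives, almost surely and for every $E$,
\beq
\lim_{L\to\infty} \frac{\Tr\big( 1_{[-L,L]} 1_{\leq E}(H_{f,\omega})\big)}{2L+1} = \lim_{L\to\infty}\frac{1}{2L+1}\sum_{n=-L}^{L} F_E(S^{-n}\omega) = \mathbb E\big(F_E\big).
\eeq

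For the second part, I would compare $\Tr\big(1_{[-L,L]} 1_{\leq E}(H_{f})\big)$ with $\Tr\big(1_{\leq E}(H_{f,L})\big) = \#\{j : \mu_j \leq E\}$, i.e.\ the quantity defining $N_f$. The key point is that $H_{f,L} = 1_{\Lambda_L} H_f 1_{\Lambda_L}$ differs from $1_{\Lambda_L} H_f 1_{\Lambda_L}$ restricted to the whole line only through the boundary-coupling terms $1_{\Lambda_L} T_f 1_{\Lambda_L^c}$, and since the off-diagonal entries of $T_f$ decay like $|m-n|^{-1-\nu}$ by \eqref{eq:offdiag-decay1}, the boundary perturbation has trace norm that grows sublinearly in $L$. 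More precisely, I would use the min-max principle together with a rank-estimate / trace-class approximation: the difference of spectral projections $1_{[-L,L]}1_{\leq E}(H_f)1_{[-L,L]}$ and $1_{\leq E}(H_{f,L})$ can be controlled because the operators $1_{\Lambda_L}(H_f - H_{f,L})1_{\Lambda_L}$ have small trace norm relative to $2L+1$; concretely, truncating $T_f$ to a banded matrix of bandwidth $R$ introduces an error of operator norm $O(R^{-\nu})$ and the banded truncation genuinely decouples, so that the number of eigenvalues below $E$ inside $[-L,L]$ changes by at most $O(R)$ upon decoupling (one gap near each boundary point) plus a shift in $E$ of size $O(R^{-\nu})$; choosing $R = R(L) \to \infty$ with $R(L)/L \to 0$ makes both corrections vanish after division by $2L+1$. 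This is a routine, if slightly technical, Weyl-type argument, and I would also invoke the fact (noted just before the lemma and used throughout) that $\inf\sigma(H_f) = 0$ so the relevant energies $E$ are near the spectral edge.

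The main obstacle I anticipate is the second step: making the comparison between the finite-volume (simple boundary condition) trace and the whole-line trace rigorous in the \emph{long-range} setting, where $T_f$ is not banded and the decoupling is only approximate. In the nearest-neighbour case this is textbook; here one must carefully juggle the bandwidth cutoff $R(L)$, the resulting $O(R^{-\nu})$ energy shift, and the $O(R)$ eigenvalue-count correction, and argue that continuity points of the limiting function are unaffected. I would handle this by passing to the banded approximation $T_{f}^{(R)}$ of $T_f$, using the operator monotonicity/perturbation bound $\|T_f - T_f^{(R)}\| \le C R^{-\nu}$, applying the clean decoupling estimate for banded matrices, and then letting $R = \lfloor L^{1/2}\rfloor$ (say) so that $R/L \to 0$ while $R \to \infty$; a standard sandwiching argument at continuity points $E$ of $\mathbb E(F_E)$ then forces equality of the two limits, and since the set of discontinuity points is countable one concludes for all $E$ by monotonicity. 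Existence of the limit defining $N_f$ itself also follows, since both sides of the sandwich converge to $\mathbb E(F_E)$.
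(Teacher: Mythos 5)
Your overall strategy matches the paper's, which is very terse: existence of the limit by ergodicity, and identification with $N_f$ via the off-diagonal decay \eqref{eq:offdiag-decay1}, with the latter delegated to the argument of \cite[Proposition~2.1]{GRM20}. Your Birkhoff-theorem sketch of the first half is exactly right.

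There is a genuine slip in the second half, however. You claim that the comparison can be controlled "because the operators $1_{\Lambda_L}(H_f - H_{f,L})1_{\Lambda_L}$ have small trace norm relative to $2L+1$," but since $H_{f,L} := 1_{\Lambda_L} H_f 1_{\Lambda_L}$ by definition, this compression is identically zero and provides no leverage at all. The object that actually encodes the boundary effect is the off-diagonal coupling $1_{\Lambda_L} H_f 1_{\Lambda_L^c} + 1_{\Lambda_L^c} H_f 1_{\Lambda_L}$, i.e.\ the difference $H_f - \bigl(H_{f,L} \oplus H_{f,\Lambda_L^c}\bigr)$, and it is the size (in rank after a bandwidth-$R$ truncation, or in trace norm) of \emph{that} operator which you must estimate. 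A closely related imprecision is the phrase "the number of eigenvalues below $E$ inside $[-L,L]$": the quantity $\Tr\bigl(1_{\Lambda_L}1_{\leq E}(H_f)\bigr)$ is the trace of a compressed spectral projection of an operator with (typically) continuous spectrum, not an eigenvalue count, and passing from the former to $\Tr\bigl(1_{\leq E}(H_{f,L})\bigr)$ is exactly the nontrivial content of the comparison. If you rephrase the argument around the decoupling error $H_f^{(R)} - \bigl(H_{f,L}^{(R)} \oplus H_{f,\Lambda_L^c}^{(R)}\bigr)$, which has rank $O(R)$ for the banded truncation, and then combine the rank estimate with the $\|T_f - T_f^{(R)}\| = O(R^{-\nu})$ bound and the choice $R = R(L)$ with $R \to \infty$, $R/L \to 0$, your Weyl-type scheme does go through; that is indeed the substance of \cite[Proposition~2.1]{GRM20}.
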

\begin{proof}[Proof of Lemma \ref{lem:IDS}]
The limit exists and is independent of the particular random realization due to the ergodicity of $H_f$ (see, e.g. \cite{MR1223779,MR2509110}).

To show that
\eqref{eq:ids-alt} equals $N_f$, we use \eqref{eq:offdiag-decay1}, proven in Lemma \ref{lem:offdiag-decay} and follow the same arguments as in \cite[Proposition 2.1]{GRM20}.
\end{proof}

Our main result is a statement on the behavior of the IDS of $H_f$ near its spectral infimum $\inf \sigma(H_f)=0$.

\begin{theorem}\label{thm:main}
Under the assumptions (A1) -- (A4) on the symbol $f$, the IDS of $H_f$ satisfies at the lower edge of the spectrum
\beq
\limsup_{E\searrow 0} \frac{\ln | \ln N_f(E) |}{\ln E} \leq -\frac 1 b,
\eeq
where $b = \displaystyle \max_{1\leq i \leq M} \beta_i$.

If, moreover, the single-site probability distribution satisfies $P_0([0,\epsilon))\geq C \epsilon^\kappa$ for some $C,\kappa>0$, we have the equality
\beq\label{double_log_asymp}
\lim_{E\searrow 0} \frac{\ln | \ln N_f(E) |}{\ln E} = -\frac 1 b.
\eeq
\end{theorem}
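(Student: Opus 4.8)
I would prove the two halves of Theorem \ref{thm:main} separately, since they come from an upper and a lower bound on $N_f(E)$, and each bound is itself obtained by a Dirichlet--Neumann-type sandwiching of $H_f$ between random banded-Toeplitz operators. The overall strategy is the classical Lifshitz-tail recipe: reduce $N_f(E)$ to the ground-state energy of finite-volume boxes, then control that ground-state energy from both sides. Concretely, I would first invoke the bracketing results of Section \ref{s:dn-bracketing} (from \cite{G20}) to squeeze $N_f$ between the IDS of two auxiliary models $H^\pm$ whose kinetic parts $T^\pm$ are banded Toeplitz matrices with symbols $f_\pm$ that, by Section \ref{s:upplow-f}, satisfy $c_- |x - E_i|^{\beta_i} \le f_\pm(x) \le c_+ |x-E_i|^{\beta_i}$ near each minimum $E_i$, and in particular $I_{f_\pm}(E) \sim C_\pm E^{1/b}$. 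Then it suffices to prove $\lim_{E\searrow 0}\frac{\ln|\ln N_{\pm}(E)|}{\ln E} = -1/b$ for these auxiliary models, and the sandwich passes the limit to $N_f$ because $\ln(CE^{1/b})/\ln E \to 1/b$ kills the multiplicative constants. I would state this reduction as the first step and then spend the bulk of the argument on the auxiliary (banded) model.

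\textbf{Upper bound on $N_f$ (the Lifshitz decay).} For the banded model $H^- = T^- + V_\omega$, write $N_-(E) = \mathbb{E}\,\langle \delta_0, 1_{\le E}(H^-_L)\,\delta_0\rangle$ up to negligible corrections; by a standard event-decoupling (Chebyshev + the fact that $H^-_L \ge$ the Dirichlet restriction of $T^-$ to $\Lambda_L$, whose lowest eigenvalue is $\gtrsim L^{-b}$ by the lower bound on $f_-$), one gets $N_-(E) \le \mathbb{P}\big(\lambda_1(H^-_{L}) \le E\big)$ with $L = L(E)$ chosen so that $\lambda_1(T^-_L) \sim 2E$, i.e. $L \sim E^{-1/b}$. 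To estimate $\mathbb{P}(\lambda_1(H^-_L)\le E)$ I would use the \emph{generalized Temple inequality} for degenerate ground states advertised in the abstract: the free operator $T^-_L$ has a (possibly $M$-fold, since $f$ has $M$ minima) near-degenerate ground space $\mathcal{P}$ spanned by slowly-varying quasi-eigenfunctions, and Temple's inequality applied with test vectors in $\mathcal{P}$ gives $\lambda_1(H^-_L) \ge \lambda_1(T^-_L) + c\,\langle \bar V_\omega \rangle_{\mathcal P}$ minus a second-order term controlled by the spectral gap of $T^-_L$ (which is $\gtrsim L^{-b}$). Since $\langle V_\omega\rangle$ averaged against a ground state of spread $L$ is $\sim \frac{1}{L}\sum_{|n|\le L}\omega_n$ and the $\omega_n\ge 0$, the event $\{\lambda_1(H^-_L)\le E\}$ forces $\sum_{|n|\le L}\omega_n \lesssim EL \sim L^{1-b}\cdot L = $ a fixed fraction of the $\sim L$ sites being atypically small; by i.i.d.\ and $\inf\supp P_0 = 0$ this has probability $\le e^{-cL} = e^{-cE^{-1/b}}$, giving $\limsup \frac{\ln|\ln N_-|}{\ln E} \le -1/b$, hence the same for $N_f$. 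The operator-monotonicity of $x\mapsto x^s$ enters here exactly as in \cite{GRM20}, to pass between $T_f$ and a suitable power of the Laplacian when estimating the gap and the ground-state spread.

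\textbf{Lower bound on $N_f$ (under $P_0([0,\epsilon))\ge C\epsilon^\kappa$).} For the complementary bound I would work with the upper bracketing operator $H^+ = T^+ + V_\omega$ and produce a trial configuration: on a box of size $L \sim E^{-1/b}$ demand $\omega_n \le \epsilon$ for all $n$ in the box, with $\epsilon$ of order $E$. By the min--max principle and the \emph{upper} bound $f_+(x)\le c_+|x-E_i|^{\beta_i}$, the Neumann (or the appropriate \cite{G20}-bracketing) restriction of $T^+$ to that box has lowest eigenvalue $\lesssim L^{-b} \sim E$, so on this event $\lambda_1(H^+_L)\le cE$, which contributes $\ge \frac{1}{2L+1}\,\mathbb{P}(\text{all }\omega_n\le\epsilon) \ge \frac{1}{2L+1}\,(C\epsilon^\kappa)^{2L+1} \ge e^{-c' L\ln(1/E)} = e^{-c' E^{-1/b}\ln(1/E)}$ to $N_+(E)$. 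Taking $\ln|\ln|$ and dividing by $\ln E$, the extra $\ln(1/E)$ factor is asymptotically negligible, so $\liminf \frac{\ln|\ln N_+|}{\ln E}\ge -1/b$, and via the bracketing $N_f \ge c N_+$ the same holds for $N_f$. Combined with the upper half this gives the stated equality.

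\textbf{Main obstacle.} The delicate point is the degenerate Temple step in the upper bound: when $f$ has $M\ge 2$ minima, the ground space of the finite-volume free operator is near-degenerate with splitting smaller than the relevant energy scale $E$, so ordinary Temple (which needs a one-dimensional ground state and a genuine gap above it) fails and one must use the degenerate version — controlling $\lambda_1(H^-_L)$ in terms of the \emph{smallest} eigenvalue of the compressed matrix $\mathcal P V_\omega \mathcal P$ while keeping the error term $O(\|V_\omega\|^2/\mathrm{gap})$ genuinely small. Making the quasi-mode analysis of the banded Toeplitz block on $\Lambda_L$ precise enough (sharp ground-state spread $\sim L$, sharp gap $\sim L^{-b}$, and the fact that the compressed potential is bounded below by $\frac{c}{L}\sum\omega_n$ with a dimension-independent constant) is where the real work lies; the boundary-condition subtleties handled in \cite{G20} are exactly what make this possible in the long-range setting, so I would lean on those estimates rather than reprove them.
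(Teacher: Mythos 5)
Your proposal follows the same overall strategy as the paper: bracket $N_f$ between auxiliary IDS's via Lemma~\ref{lm:upp_low} and Corollary~\ref{corupplow}, pass to finite-volume restrictions with the modified Dirichlet--Neumann boundary conditions of \cite{G20} (using operator monotonicity of $x\mapsto x^s$ to handle the fractional power), and then control the ground-state energy of the Neumann restriction from below via a degenerate Temple inequality for the upper bound, and from above via a slowly-varying trial function for the lower bound, exactly as in Sections~\ref{s:upp}--\ref{s:low}. The obstacle you flag as the ``delicate point'' (degenerate ground space forcing a compressed-potential variant of Temple) is precisely what Theorem~\ref{gen:temple} and Lemma~\ref{lower_bound_lm2} handle.

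One recurring slip worth correcting: you consistently attach the wrong boundary condition to each half. Since the modified Neumann term is nonpositive, $T^{\mathcal N,\mathcal N}_{g,L}$ is a \emph{lower} operator bound, so it produces the \emph{upper} bound on the IDS; it is the Neumann restriction (whose bottom $N$ eigenvalues are exactly zero, not merely ``near-degenerate'', by Proposition~\ref{prop:g20}, with gap $\gtrsim L^{-b}$ above) that enters the Temple step. Dually, the Dirichlet restriction $T^{\mathcal D,\mathcal D}_{\tilde g,L}$ gives the lower bound on the IDS, and it is there that the trial function $\psi_L$ (Lemma~\ref{lem:power}) is used. Also, the event reduction is done via a large deviation estimate as in \cite[Lemma~6.4]{MR2509110} rather than Chebyshev, but that is essentially the same idea. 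With the boundary conditions swapped back, your sketch matches the paper's argument.
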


The theorem above shows that, under the stated conditions on $f$, as $E\searrow 0$
\beq\label{eq:LTstrong}
N_f(E) \sim e^{-\frac 1 {I_f(E)}}
\eeq
in a weak double logarithmic way.
Therefore, the behavior of the IDS of $H_f$ at the bottom of the spectrum is determined by the behavior of the free operator $T_f$ there.

\section{Upper and lower bounds on the symbol $f$}\label{s:upplow-f}

\begin{lemma}\label{lm:upp_low}
Let $f\in L^\infty([-\pi, \pi])$ satisfying Assumptions $(A1)-(A4)$. Then there exist constants $0<c\leq C$ such that for all $x\in[-\pi,\pi)$
\beq
c\prod_{i=1}^n \big(2-2\cos(x-E_i)\big)^{b/2}
\leq
f(x)
\leq
C\big(2-2\cos(x - E_{i_0})\big)^{b/2}
\eeq
where, as before, $b= \displaystyle\max_{1\leq i \leq M} \beta_i$ and $i_0 \in \big\{i: \beta_i = b\big\}$ where $\beta_1,..,\beta_M$ are the exponents given in assumption (A4).
\end{lemma}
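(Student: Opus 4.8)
The plan is to work locally near each minimum $E_i$ and patch together the local estimates with compactness. Fix $i$. By Assumption (A4), the positive limit
\[
L_i := \lim_{\substack{x\to E_i,\, x\neq E_i}} \frac{f(x)}{|x-E_i|^{\beta_i}}
\]
exists, so there is a neighbourhood $U_i$ of $E_i$ and constants $0<c_i\le C_i$ with $c_i|x-E_i|^{\beta_i}\le f(x)\le C_i|x-E_i|^{\beta_i}$ for $x\in U_i$. The elementary comparison $\tfrac{2}{\pi^2}\,t^2\le 2-2\cos t\le t^2$ for $|t|\le\pi$ (equivalently, $2-2\cos t = 4\sin^2(t/2)\asymp t^2$) lets me replace $|x-E_i|^{\beta_i}$ by $\bigl(2-2\cos(x-E_i)\bigr)^{\beta_i/2}$ at the cost of adjusting constants. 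This gives, on a possibly smaller neighbourhood $U_i$, a two-sided bound $f(x)\asymp\bigl(2-2\cos(x-E_i)\bigr)^{\beta_i/2}$, and since $\beta_i\le b$ and $2-2\cos(x-E_i)\le 4$ is bounded, $\bigl(2-2\cos(x-E_i)\bigr)^{\beta_i/2}\ge 4^{(\beta_i-b)/2}\bigl(2-2\cos(x-E_i)\bigr)^{b/2}$, so near $E_i$ we also control $f$ from below by $\bigl(2-2\cos(x-E_i)\bigr)^{b/2}$; and for the index $i_0$ with $\beta_{i_0}=b$ we get the matching upper bound $f(x)\le C_{i_0}\bigl(2-2\cos(x-E_{i_0})\bigr)^{b/2}$ on $U_{i_0}$.

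Next I handle the region away from the minima. Let $K=\mathbb T\setminus\bigcup_{i=1}^M U_i$, a compact set on which $f$ is continuous (Assumption (A2)) and strictly positive (by (A3), since the only zeros of $f$ are the $E_i$), hence $\inf_K f =: m >0$ and $\sup_{\mathbb T} f =: M_0 <\infty$. On $K$ each factor $2-2\cos(x-E_i)$ is bounded above by $4$, so $\prod_{i=1}^M\bigl(2-2\cos(x-E_i)\bigr)^{b/2}\le 4^{Mb/2}$, and therefore $f(x)\ge m \ge \tfrac{m}{4^{Mb/2}}\prod_{i=1}^M\bigl(2-2\cos(x-E_i)\bigr)^{b/2}$ on $K$; combined with the local lower bounds from the previous paragraph (each of the form $f\ge \text{const}\cdot\bigl(2-2\cos(x-E_j)\bigr)^{b/2}$ on $U_j$, which dominates the full product up to a constant since every other factor is between its minimum positive value on $U_j$ — bounded below since $E_k\notin U_j$ for $k\ne j$ after shrinking — and $4$), taking the minimum of the finitely many constants yields the global lower bound $f(x)\ge c\prod_{i=1}^M\bigl(2-2\cos(x-E_i)\bigr)^{b/2}$ for all $x$. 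For the upper bound, on $K$ the single factor $2-2\cos(x-E_{i_0})$ is bounded below (since $E_{i_0}\notin K$), say by $\delta>0$, so $f(x)\le M_0\le M_0\delta^{-b/2}\bigl(2-2\cos(x-E_{i_0})\bigr)^{b/2}$; together with the local upper bound near $E_{i_0}$ and, near the other $U_j$ ($j\ne i_0$), the bound $f\le M_0\le \text{const}\cdot\bigl(2-2\cos(x-E_{i_0})\bigr)^{b/2}$ (again because $x-E_{i_0}$ stays away from $0$ on $U_j$), we get a global upper bound $f(x)\le C\bigl(2-2\cos(x-E_{i_0})\bigr)^{b/2}$. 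Setting $c$ and $C$ to the respective extremal constants finishes the proof.

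The main obstacle is bookkeeping rather than conceptual: one must choose the neighbourhoods $U_i$ pairwise disjoint and small enough that (i) the algebraic asymptotics from (A4) are in force on each $U_i$ and (ii) on $U_j$ every factor $2-2\cos(x-E_k)$ with $k\ne j$ is bounded away from $0$, so that the "missing" factors in a local estimate can be absorbed into constants. Since there are only finitely many minima, all of this is achievable, and the passage from $|x-E_i|$ to $2-2\cos(x-E_i)$ is the standard quadratic comparison. One subtlety worth flagging: (A4) guarantees the asymptotic ratio exists and is positive but a priori only controls $f$ near $E_i$; the strict positivity of $f$ on the compact complement $K$ — which is where (A3) (finitely many minima, all equal to $0$) is essential — is what makes the global patching work.
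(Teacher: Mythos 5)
Your proof is correct and rests on the same essential ingredients as the paper's: the algebraic asymptotics from (A4), the quadratic comparison $2-2\cos t\asymp t^2$, and compactness of $\mathbb T$. The paper packages all of this in one step by introducing $g(x)=f(x)/\prod_i\bigl(2-2\cos(x-E_i)\bigr)^{\beta_i/2}$, extending it continuously across the $E_i$ (which is exactly your local analysis), and invoking compactness once to get $c_1\le g\le C_1$; your version does the equivalent explicitly by patching local estimates on the $U_i$ against the compact complement $K$, so the two arguments are essentially the same.
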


\begin{proof}
We define for $x\in[-\pi,\pi]\setminus\{E_1,...E_M\}$
\beq
g(x) := \frac{f(x)}{\prod_{i=1}^M \big(2-2\cos(x-E_i)\big)^{\beta_i/2}}>0.
\eeq
where $E_i$, $\beta_i$, $i=1,...,M$ are as in Assumptions (A3)\,--\,(A4). Moreover, for $i=1,...,M$ we compute
\beq
\lim_{\substack{x\to E_i\\ x\neq E_i}} g(x) =
\frac 1 {\prod_{j=1,j\neq i}^M \big(2-2\cos(x-E_j)\big)^{\beta_j/2}}
\lim_{\substack{x\to E_i\\ x\neq E_i}} \frac{f(x)}{|x - E_i|^{\beta_i}} \frac{|x - E_i|^{\beta_i}}{\big(2-2\cos(x-E_i)\big)^{\beta_i/2}} .
\eeq
By Assumption $(A4)$ this limit exists and is strictly positive. Hence, we can extend $g$ to a continuous function on $[-\pi,\pi]$ with $g(x)>0$ for all $x\in[-\pi,\pi]$. Since $[-\pi,\pi]$ is compact, there exist
constants $c_1,C_1>0$ such that $c_1\leq g(x) \leq C_1$ for all $x\in [-\pi,\pi]$, that is,
\beq
c_1\prod_{i=1}^M \big(2-2\cos(x-E_i)\big)^{\beta_i/2}
\leq f(x)
\leq
C_1 \prod_{i=1}^M \big(2-2\cos(x-E_i)\big)^{\beta_i/2}.
\eeq
Since for $i=1,...,M$ and $x\in[-\pi,\pi]$  we have $0\leq 2-2\cos(x-E_i)\leq 4$, we further estimate from below
\beq
\prod_{i=1}^M \big(2-2\cos(x-E_i)\big)^{\beta_i/2} \geq 4^{\frac 1 2(\sum_i \beta_i -M b)} \prod_{i=1}^M \big(2-2\cos(x-E_i)\big)^{b/2}
\eeq
where $b= \max_i \beta_i$ and
\beq
\prod_{i=1}^M \big(2-2\cos(x-E_i)\big)^{\beta_i/2} \leq 4^{\sum_{i\neq i_0} \beta_i}\big( 2 - 2 \cos(x - E_{i_0})\big)^{b/2}
\eeq
where $i_0 \in \big\{i:\ \beta_i = b\big\}$. Setting $c := c_1 4^{\frac 1 2(\sum_i \beta_i -M b)}$ and $C:= C_1 4^{\sum_{i\neq i_0} \beta_i}$, gives the assertion.
\end{proof}

If the symbols $f_1,f,f_2\in\rm{L}^\infty(\mathbb T)$ satisfy $f_1\leq f \leq f_2$ , then the Laurent operators associated to them fulfil $T_{f_1}\leq T_f\leq T_{f_2}$ in operator sense. In turn, we have that $H_{f_1}\leq H_f \leq H_{f_2}$ in operator sense.  Lemma \ref{lm:upp_low} then yields the following

\begin{corollary}\label{corupplow}
Let $f_1(x) := c\prod_{i=1}^M \big(2-2\cos(x-E_i)\big)^{b/2}$,
$f_2(x) := C\big(2-2\cos(x - E_{i_0})\big)^{b/2}$ for $x\in[-\pi,\pi]$, and $0<c\leq C$ as in Lemma \ref{lm:upp_low}. Then for all $E\in\R$
\beq
N_{f_2}(E) \leq N_f(E) \leq N_{f_1}(E).
\eeq
\end{corollary}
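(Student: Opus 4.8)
The plan is to deduce Corollary \ref{corupplow} directly from Lemma \ref{lm:upp_low} together with the monotonicity of the IDS under the operator ordering. First I would record the elementary but crucial fact: if $f_1\le f\le f_2$ pointwise (almost everywhere) on $\mathbb T$, then $T_{f_1}\le T_f\le T_{f_2}$ as bounded self-adjoint operators on $\ell^2(\mathbb Z)$. This is immediate from the unitary equivalence $T_g\cong M_g$ on $L^2(\mathbb T)$ via the discrete Fourier transform $\mathcal F$: for any $u\in\ell^2(\mathbb Z)$ one has $\langle u, T_g u\rangle = \int_{\mathbb T} g(k)\,|(\mathcal F u)(k)|^2\,\mathrm{d}k$, so $f_1\le f\le f_2$ transfers to the quadratic forms. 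Adding the common potential $V_\omega$ preserves the ordering, hence $H_{f_1}\le H_f\le H_{f_2}$ for every $\omega$, and the same holds for the finite-volume restrictions $H_{f_i,L}=1_{\Lambda_L}H_{f_i}1_{\Lambda_L}$ since compression to $\ell^2(\Lambda_L)$ is monotone in the operator order.

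Next I would invoke the min-max principle: if $A\le B$ are self-adjoint on the finite-dimensional space $\ell^2(\Lambda_L)$, then their eigenvalues (in non-decreasing order, with multiplicity) satisfy $\lambda_j(A)\le\lambda_j(B)$ for every $j$. Consequently the counting functions reverse the inequality, $\#\{j:\lambda_j(B)\le E\}\le\#\{j:\lambda_j(A)\le E\}$, so for the finite-volume eigenvalue counting functions of $H_{f_1,L}\le H_{f,L}\le H_{f_2,L}$ we get, after dividing by $2L+1$ and letting $L\to\infty$ along a full-measure set of $\omega$ (using Lemma \ref{lem:IDS} to guarantee the limits exist and equal the respective IDS), precisely $N_{f_2}(E)\le N_f(E)\le N_{f_1}(E)$ for all $E\in\mathbb R$. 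One small point of care: the limit defining $N_g$ may fail to exist at the (at most countably many) discontinuity points of $N_g$, but the inequality between the counting-function ratios holds for every $L$, so passing to $\liminf$ and $\limsup$ and using that all three limits do exist almost surely gives the stated inequality with no exceptional set of energies.

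Finally, I would apply Lemma \ref{lm:upp_low}, which furnishes constants $0<c\le C$ with $c\prod_{i=1}^M(2-2\cos(x-E_i))^{b/2}\le f(x)\le C(2-2\cos(x-E_{i_0}))^{b/2}$ on $[-\pi,\pi)$; these are exactly the symbols $f_1$ and $f_2$ in the statement. Both $f_1$ and $f_2$ are themselves real-valued, $\nu$-Hölder-continuous (as positive powers $b/2\ge\nu/2$... — actually $b\ge\nu$ forces $b/2$ could be below $\nu$, but a product/power of cosines is smooth away from its zeros and Hölder with the relevant exponent near them, so $f_1,f_2$ again satisfy (A1)--(A4) with the same $b$), so $H_{f_1},H_{f_2}$ are legitimate operators of the class under study and Lemma \ref{lem:IDS} applies to them. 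Combining the operator inequality $H_{f_1}\le H_f\le H_{f_2}$ with the eigenvalue-ordering argument above yields the corollary.

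I do not expect any genuine obstacle here: the only mild subtlety is the bookkeeping around the exceptional set of energies where one of the IDS limits might a priori fail to converge, which is handled by working with the monotone counting-function ratios at finite $L$ before passing to the limit. Everything else is the standard chain ``pointwise symbol inequality $\Rightarrow$ quadratic-form inequality $\Rightarrow$ compression inequality $\Rightarrow$ min-max for eigenvalues $\Rightarrow$ reversed counting functions $\Rightarrow$ reversed IDS.''
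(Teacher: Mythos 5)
Your proposal is correct and follows exactly the same route as the paper: pointwise symbol inequality $\Rightarrow$ operator inequality $T_{f_1}\le T_f\le T_{f_2}$ (via $T_g\cong M_g$) $\Rightarrow$ $H_{f_1}\le H_f\le H_{f_2}$ $\Rightarrow$ reversed IDS ordering, combined with Lemma~\ref{lm:upp_low}. The paper states this chain in two sentences without further elaboration, so your version is simply a more detailed write-up (min-max for the finite-volume eigenvalues, the small aside about the exceptional set of energies, and the check that $f_1,f_2$ fall within the admissible class) of the same argument.
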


\begin{example}
We illustrate the upper and lower bound on a given symbol with an example. Let $f:\mathbb T\to\R$ be the symbol
\beq
f(t) = 0.5 \big(2-2 \cos(t)\big)^{0.3} \big(2-2 \cos(t - 2.5)\big)^{0.6} \big(2-2 \cos(t + 2)\big)^{0.7}.
\eeq
Then we bound $f$ from below and above by
\begin{align}
0.5 \big(2-2 \cos(t)\big)^{0.7} &\big(2-2 \cos(t - 2.5)\big)^{0.7} \big(2-2 \cos(t + 2)\big)^{0.7} \notag\\
&
\leq f(t)
\leq\notag\\
3 &\big(2-2 \cos(t + 2)\big)^{0.7}.
\end{align}
We didn't optimize on the constants $0.5$ and $3$.
\end{example}
\begin{figure}[h]
\caption{Example of a symbol and its upper and lower bound }
\centering
\includegraphics[scale=.4]{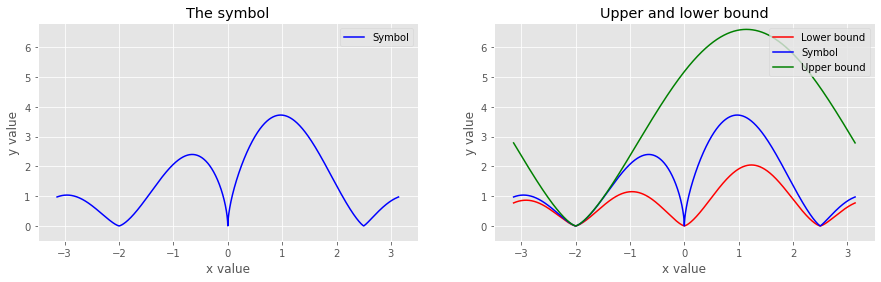}
\end{figure}

\section{Dirichlet-Neumann bracketing for banded Laurent matrices}\label{s:dn-bracketing}

We define for distinct $E_1,...,E_M$, $\alpha>0$ and $\mathfrak{c}>0$, the function $f_{\alpha,E_1,...,E_M}:[-\pi,\pi]\to \R$,
\beq\label{eq:def-f}
 f_{\alpha,E_1,...,E_M}(x) := \mathfrak{c}\prod_{i=1}^M \big(2-2\cos(x-E_i)\big)^{\alpha},
\eeq
and write
\beq\label{eq:f-g}
 f_{\alpha,E_1,...,E_M}(x) = \mathfrak{c}\Big(\prod_{i=1}^M \big(2-2\cos(x-E_i)\big)^{\overline \alpha } \Big)^\beta =: \mathfrak{c} g(x)^\beta
\eeq
where
$\overline \alpha:=\min\{n\in\N: \alpha\leq n\}\in\N$, $\beta:= \alpha/\overline\alpha\leq 1$ and
\beq\label{eq:def-g}
g(x):=\prod_{i=1}^M \big(2-2\cos(x-E_i)\big)^{\overline \alpha }.
\eeq
Note that $g$ involves integer powers of $2-2\cos(x-E_i)$, and its associated operator $T_g$  is a banded Laurent matrix with band width $2N+1$, where $N=M\overline{\alpha}$ (see \cite[Section 1]{G20}).

To study the IDS of functions of the type \eqref{eq:def-f}, it is useful to consider finite-volume restrictions with certain boundary conditions. We have encountered already the so-called simple boundary conditions, where $T_{g,[a,b]}$ stands for the restriction of $T_g$ to $[a,b]\cap\mathbb Z$ given by $T_{g,[a,b]}=1_{[a,b]}T_g1_{[a,b]}$. In what follows, we will introduce other boundary conditions that will allow us to obtain lower and upper bounds on the IDS of functions of type \eqref{eq:def-f}, obtained in \cite{G20}.

We recall the notion of modified Dirichlet (respectively Neumann) boundary conditions for Laurent matrices given in \cite[Section 2]{G20}. For a banded Laurent matrix with band size $2N+1$, imposing boundary conditions consists in adding Hermitian $N\times N$ matrices at the endpoints of the respective boundary. For $a<b$, $a,b\in\mathbb Z\cup{\pm \infty}$ we define the restriction of $T_g$ to the interval $[a,b]\cap\mathbb Z$ with boundary condition $\star$ at the left/right endpoint as
 \beq\label{def:bc}
	 T^{\star,0}_{g,[a,b]} :=
	T_{g,[a,b]} +
	\begin{pmatrix}
		B_\star  & 0 \\
		0 & 0
	\end{pmatrix}
	\quad\text{and}\quad
	T^{0,\star}_{f,[a,b]} :=
	T_{g,[a,b]} +
	\begin{pmatrix}
		0  & 0 \\
		0 & \wtilde B_\star
	\end{pmatrix}
\eeq
where, $B_\star$ is an $N\times N$ matrix and $\wtilde{ B}_\star$ is the reflection of $B_\star$ along the anti-diagonal, i.e. $\wtilde{ B}_\star := U^* B_\star U$ with $U: \C^N\to\C^N$, $ (Ux)_k := x_{N-k+1}$ for $x = (x_1,...,x_N)\in \C^N $. If the boundary conditions are imposed on both left and right endpoints, we write $T^{\star,\star}_{g,[a,b]}$ which is of the form
\beq T^{\star,\star}_{g,L} :=
	T_{g,[a,b]} +
	\begin{pmatrix}
		B_\star  & 0 & 0\\
		0 & 0 & 0 \\
		0 & 0 & \wtilde B_\star
	\end{pmatrix}.
\eeq
In the latter, the $0$s stand for matrices consisting of zeros in the respective size.
In the case where either $a=-\infty$ or $b=+\infty$, we write $T^{\star}_{g,(-\infty,b]}$ to indicate that the boundary condition is defined at the point $b$, respectively $T^{\star}_{g,[a,\infty)}$ for when it is defined at the point $a$.

By \cite[Remarks 2.3]{G20}, the modified Dirichlet boundary conditions $\mathcal D$ are given by non-negative definite matrices $B_{\mathcal D}, \wtilde B_{\mathcal D}$, and the Neumann boundary conditions $\mathcal N$ are given by  $B_{\mathcal N} = -B_{\mathcal D}, \wtilde B_{\mathcal N} = - \wtilde B_{\mathcal D}$ and are therefore non-positive. The matrices are defined in such a way that they satisfy the following Dirichlet--Neumann bracketing:

\begin{theorem}[Theorem 1.1 \cite{G20}]
Let $M\in\mathbb N$, $E_1,...,E_M\in\mathbb T$ distinct and $\alpha>0$. Let $g$ be the function given in \eqref{eq:def-g} and $N=M\overline\alpha$. There exists boundary conditions $\mathcal N$ and $\mathcal D$ such that
\beq\label{eq:d-n-bracketing-T}
T_{g,\L_{L_1}}^{\mathcal N,\mathcal N}\oplus T_{g,\Lambda_{L_2}}^{\mathcal N,\mathcal N} \leq T_{g,\Lambda_{L}} \leq T_{g,\Lambda_{L_1}}^{\mathcal D,\mathcal D} \oplus T_{g,\Lambda_{L_2}}^{\mathcal D,\mathcal D}
\eeq
for all $L_1,L_2\in\Z$ with $L_1,L_2\geq 2N+1$ such that $\L_{L_1},\L_{L_2}$ are disjoint intervals with $\L_{L_1}\cup \L_{L_2}=\L_L$.
\end{theorem}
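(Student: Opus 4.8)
The plan is to reduce the Dirichlet--Neumann bracketing for the banded Laurent matrix $T_g$ to a finite-rank perturbation argument, showing that the ``glued'' operator on $\Lambda_L = \Lambda_{L_1}\cup\Lambda_{L_2}$ differs from $T_{g,\Lambda_L}$ only by a Hermitian matrix supported near the cut, and that with the Dirichlet (resp.\ Neumann) boundary matrices this perturbation is non-negative (resp.\ non-positive). First I would fix the partition $\Lambda_L = \Lambda_{L_1}\,\dot\cup\,\Lambda_{L_2}$ into two adjacent intervals and write out $T_{g,\Lambda_L}$ in block form with respect to this splitting. Since $T_g$ is banded with band width $2N+1$, the only off-diagonal coupling between the two blocks lives in an $N\times N$ corner where $\Lambda_{L_1}$ meets $\Lambda_{L_2}$; call this coupling block $C$. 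Thus $T_{g,\Lambda_L} = \big(T_{g,\Lambda_{L_1}}\oplus T_{g,\Lambda_{L_2}}\big) + R$, where $R$ is a Hermitian matrix of rank at most $2N$ supported on the $2N$ sites straddling the cut, built from $C$ and $C^*$ and whatever diagonal correction is needed so that the two sides describe the same operator $T_g$ restricted with simple boundary conditions on each piece.

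The key algebraic point, which is exactly the content of \cite[Section 2]{G20} that we are invoking, is that one can choose the modified Dirichlet boundary matrices $B_{\mathcal D},\wtilde B_{\mathcal D}\geq 0$ so that
\beq
\begin{pmatrix} \wtilde B_{\mathcal D} & 0 \\ 0 & 0\end{pmatrix}
+ \begin{pmatrix} 0 & 0 \\ 0 & B_{\mathcal D}\end{pmatrix}
- R \;\geq\; 0
\eeq
on the $2N$-dimensional space near the cut, i.e.\ the boundary correction dominates the severed hopping term; here the first summand lives on the right end of $\Lambda_{L_1}$ and the second on the left end of $\Lambda_{L_2}$, matching the notation of \eqref{def:bc}. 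Granting this, we get
\beq
T_{g,\Lambda_L} \leq \big(T_{g,\Lambda_{L_1}}\oplus T_{g,\Lambda_{L_2}}\big) + \begin{pmatrix}\wtilde B_{\mathcal D}&0\\0&0\end{pmatrix}\oplus\begin{pmatrix}0&0\\0&B_{\mathcal D}\end{pmatrix}
= T^{\mathcal D,\mathcal D}_{g,\Lambda_{L_1}}\oplus T^{\mathcal D,\mathcal D}_{g,\Lambda_{L_2}},
\eeq
where on the left endpoint of $\Lambda_{L_1}$ and the right endpoint of $\Lambda_{L_2}$ we put the remaining Dirichlet matrices (these only add a further non-negative term, preserving the inequality). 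The lower bound follows symmetrically: with $B_{\mathcal N}=-B_{\mathcal D}$, $\wtilde B_{\mathcal N}=-\wtilde B_{\mathcal D}$ one has $-R + (\text{negative boundary terms}) \leq 0$, hence $T^{\mathcal N,\mathcal N}_{g,\Lambda_{L_1}}\oplus T^{\mathcal N,\mathcal N}_{g,\Lambda_{L_2}} \leq T_{g,\Lambda_L}$. The hypothesis $L_1,L_2\geq 2N+1$ ensures the two boundary corrections at the cut act on disjoint corner blocks from the boundary corrections at the outer endpoints, so that all the block decompositions above are literally direct sums and no overlap of supports occurs.

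The main obstacle is the algebraic heart of the matter: producing the boundary matrices $B_{\mathcal D},\wtilde B_{\mathcal D}$ explicitly and verifying the operator inequality $\wtilde B_{\mathcal D}\oplus B_{\mathcal D} \geq R$ near the cut. This is where one genuinely uses that $g$ has the special product form \eqref{eq:def-g} with integer exponents $\overline\alpha$: the severed hopping block $R$ has a factorization inherited from $g = \prod_i(2-2\cos(x-E_i))^{\overline\alpha}$ that makes it a sum of terms each of which is a positive-semidefinite ``corner'' contribution, and the boundary matrices are chosen precisely to cancel these. Since this construction is carried out in detail in \cite[Section 2]{G20} and we are allowed to cite it, in the present note the proof amounts to quoting that construction and checking that the disjointness/size conditions on $L_1,L_2$ are exactly those under which the cited bracketing applies; I would therefore present the argument as the block-decomposition reduction above followed by an appeal to \cite[Theorem 1.1]{G20}.
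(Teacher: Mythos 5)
The paper itself does not prove this statement: it is quoted verbatim as Theorem 1.1 of \cite{G20} and invoked as a black box, so there is no in-paper argument to compare against. Your sketch — decomposing $T_{g,\Lambda_L}$ as $T_{g,\Lambda_{L_1}}\oplus T_{g,\Lambda_{L_2}}$ plus a Hermitian coupling $R$ of rank at most $2N$ supported at the cut (there is in fact no diagonal correction, since $T_{g,\Lambda_{L_j}}=1_{\Lambda_{L_j}}T_g 1_{\Lambda_{L_j}}$ are exactly the diagonal blocks), then choosing $B_{\mathcal D},\wtilde B_{\mathcal D}\geq 0$ so that the boundary correction at the cut dominates $R$, and finally appealing to \cite[Section 2]{G20} for the actual construction — is the natural reduction and matches what the paper does in spirit, namely defer the substance to \cite{G20}; the only point worth tightening is that the Neumann lower bound is not \emph{verbatim} symmetric (it needs $\text{b.m.}+R\geq 0$ rather than $\text{b.m.}-R\geq 0$), though the two are equivalent here because $R$ is purely off-diagonal in the block splitting and hence conjugate to $-R$ by a diagonal sign change.
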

From the positive-definiteness (respectively, negative-definiteness) of the boundary conditions $\mathcal D$ (respectively $\mathcal N$) we readily conclude that for $L\in\mathbb N$, $\L_L=[-L,L]\cap\mathbb Z$ and $\L^c_L=\mathbb Z\setminus \L_L=(-\infty,-L-1)\cup(L+1,\infty)$
\beq
T_{g,\L_L}^{\mathcal N,\mathcal N}\oplus T_{g,\L_L^c}^{\mathcal N}\leq T_g \leq T_{g,\L_L}^{\mathcal D,\mathcal D}\oplus T_{g,\L^c_L}^{\mathcal D},
\eeq
where $T_{g,\L^c_L}^{\star}$ denotes the restriction of $T_g$ to $\L_L^c$ with boundary conditions $\star\in\{\mathcal D,\mathcal N\}$ at the endpoints $-L-1$ and $L+1$.
\begin{corollary}\label{cor:upp-low}
Let $M\in\mathbb N$, $E_1,...,E_M\in\mathbb T$ distinct, $\alpha>0$ and $\mathfrak{c}>0$. Let $f_{\alpha,E_1,...,E_M}$ and $g$ be the functions given in \eqref{eq:def-f} and \eqref{eq:def-g}, respectively, and $N=M\overline\alpha$. For $0<\beta\leq 1$ we have for all $L\geq 2N+1$ that
\beq
\mathfrak{c} \left(T_{g,\L_L}^{\mathcal N,\mathcal N}\right)^\beta \leq 1_{\L_L}T_{f_{\alpha,E_1,...,E_M}}1_{\L_L} \leq  \mathfrak{c} \left(T_{g,\L_L}^{\mathcal D,\mathcal D}\right)^\beta.
\eeq
\end{corollary}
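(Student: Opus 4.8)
The plan is to combine the Dirichlet--Neumann bracketing for the banded matrix $T_g$ (the displayed inequality
$T_{g,\L_L}^{\mathcal N,\mathcal N}\oplus T_{g,\L_L^c}^{\mathcal N}\leq T_g \leq T_{g,\L_L}^{\mathcal D,\mathcal D}\oplus T_{g,\L^c_L}^{\mathcal D}$
obtained just above from Theorem 1.1 of \cite{G20}) with the operator monotonicity of $x\mapsto x^\beta$ for $0<\beta\leq 1$, and then compress back to $\ell^2(\L_L)$. First I would note that all operators in sight are nonnegative: $T_g\geq 0$ since its symbol $g=\prod_i(2-2\cos(x-E_i))^{\overline\alpha}$ is nonnegative (each factor is $\geq 0$), and the compressions $T_{g,\L_L}^{\mathcal N,\mathcal N}$, $T_{g,\L_L^c}^{\mathcal N}$ are nonnegative because the Neumann bracketing sandwiches them below $T_g\geq 0$ while also — here one must be slightly careful — one needs $T_{g,\L_L}^{\mathcal N,\mathcal N}\geq 0$ in its own right; this follows because $T_{g,\L_L}^{\mathcal N,\mathcal N}$ is a direct summand in the lower bound for $T_{g,\L_{2L+1}}$ (or more directly from the structure in \cite{G20}, where the Neumann restriction of a nonnegative banded Laurent matrix is again nonnegative). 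Having secured nonnegativity, I apply $x\mapsto x^\beta$, which is operator monotone on $[0,\infty)$ for $\beta\in(0,1]$, to the two-sided bound
\[
T_{g,\L_L}^{\mathcal N,\mathcal N}\oplus T_{g,\L_L^c}^{\mathcal N}\leq T_g \leq T_{g,\L_L}^{\mathcal D,\mathcal D}\oplus T_{g,\L^c_L}^{\mathcal D},
\]
obtaining
\[
\bigl(T_{g,\L_L}^{\mathcal N,\mathcal N}\bigr)^\beta\oplus \bigl(T_{g,\L_L^c}^{\mathcal N}\bigr)^\beta\leq T_g^\beta \leq \bigl(T_{g,\L_L}^{\mathcal D,\mathcal D}\bigr)^\beta\oplus \bigl(T_{g,\L^c_L}^{\mathcal D}\bigr)^\beta,
\]
using that the $\beta$-th power of a direct sum is the direct sum of the $\beta$-th powers (both summands are nonnegative).

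Next I would multiply through by $\mathfrak c>0$ and observe that, by \eqref{eq:f-g}, $\mathfrak c\, T_g^\beta = T_{f_{\alpha,E_1,\dots,E_M}}$ as operators on $\ell^2(\Z)$: indeed $f_{\alpha,E_1,\dots,E_M}=\mathfrak c\, g^\beta$ pointwise, and under the Fourier diagonalization $\mathcal F$, $T_g$ corresponds to multiplication by $g$, so $T_g^\beta$ corresponds to multiplication by $g^\beta$ (functional calculus commutes with the unitary $\mathcal F$), whence $\mathfrak c\, T_g^\beta$ corresponds to multiplication by $\mathfrak c\, g^\beta = f_{\alpha,E_1,\dots,E_M}$, i.e. equals $T_{f_{\alpha,E_1,\dots,E_M}}$. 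Finally I compress the three-term inequality to $\ell^2(\L_L)$ by sandwiching with the projection $1_{\L_L}$: compression is monotone, $1_{\L_L}\bigl(A\oplus B\bigr)1_{\L_L} = 1_{\L_L} A\, 1_{\L_L}$ when $A$ acts on $\ell^2(\L_L)$ and $B$ on $\ell^2(\L_L^c)$, and $1_{\L_L} T_{f_{\alpha,E_1,\dots,E_M}} 1_{\L_L}$ is by definition the operator in the middle of the claimed inequality. This yields
\[
\mathfrak c\,\bigl(T_{g,\L_L}^{\mathcal N,\mathcal N}\bigr)^\beta \leq 1_{\L_L}T_{f_{\alpha,E_1,\dots,E_M}}1_{\L_L} \leq \mathfrak c\,\bigl(T_{g,\L_L}^{\mathcal D,\mathcal D}\bigr)^\beta,
\]
which is the assertion.

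The main obstacle is not the operator-monotonicity step, which is completely standard, but rather checking the compatibility of the finite-volume restrictions: one must verify that the Neumann and Dirichlet restrictions appearing in the bracketing for $T_g$ on $\L_L$ are exactly the $T_{g,\L_L}^{\mathcal N,\mathcal N}$ and $T_{g,\L_L^{c}}^{\mathcal N}$ etc.\ referenced above, that these are genuinely nonnegative (for $\mathcal N$ this requires the explicit construction of $B_{\mathcal N}$ from \cite{G20}, where the Neumann matrices, though non-positive as added perturbations, still leave the total operator nonnegative — this is the content of the bracketing inequality restricted to a single block), and that taking $\beta$-th powers commutes with the orthogonal direct-sum decomposition $\ell^2(\Z)=\ell^2(\L_L)\oplus\ell^2(\L_L^c)$ adapted to these operators. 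Once the bookkeeping of boundary conditions is pinned down — all of which is available from Section \ref{s:dn-bracketing} and \cite{G20} — the estimate follows in a few lines as above.
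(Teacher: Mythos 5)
Your proof takes essentially the same route as the paper: take the whole-space Dirichlet--Neumann bracketing for $T_g$, apply operator monotonicity of $x\mapsto x^\beta$ for $\beta\in(0,1]$ using that powers commute with direct sums, identify $\mathfrak c\,T_g^\beta=T_{\mathfrak c g^\beta}=T_{f_{\alpha,E_1,\dots,E_M}}$ via the Fourier diagonalization, and then compress with $1_{\L_L}$.

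One small warning about your justification for $T_{g,\L_L}^{\mathcal N,\mathcal N}\geq 0$: neither "it is sandwiched below $T_g\geq 0$" nor "it is a direct summand in a lower bound for $T_{g,\L_{2L+1}}$" actually implies nonnegativity (being dominated from above by a nonnegative operator gives nothing). The correct reason, which you gesture at only in passing, is the explicit spectral information from \cite[Prop.\ 1.4]{G20} (Proposition~\ref{prop:g20} here): the eigenvalues of $T_{g,L}^{\mathcal N,\mathcal N}$ are $\lambda_1=\dots=\lambda_N=0$ and $\lambda_{N+1}>0$, so the operator is nonnegative; a similar local argument works for the half-infinite restriction $T_{g,\L_L^c}^{\mathcal N}$. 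With that reference the argument closes cleanly and matches the paper's.
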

\begin{proof}
Using the fact that the map $[0,\infty)\to\R:x\mapsto x^\beta$ with $\beta\in(0,1]$ is operator monotone \cite[Thm. V.2.10]{Bha97}, and that the operators involved in the direct sum act as block diagonal matrices, we obtain
\beq
\left(T_{g,\L_L}^{\mathcal N,\mathcal N}\right)^\beta\oplus \left(T_{g,\L_L^c}^{\mathcal N,\mathcal N}\right)^\beta\leq \left(T_g\right)^\beta \leq \left(T_{g,\L_L}^{\mathcal D,\mathcal D}\right)^\beta\oplus \left(T_{g,\L_L^c}^{\mathcal D,\mathcal D}\right)^\beta.
\eeq
Projecting on both sides of the inequalities onto $\L_L$, we obtain
\beq
\left(T_{g,\L_L}^{\mathcal N,\mathcal N}\right)^\beta \leq 1_{\L_L}\left(T_g\right)^\beta1_{\L_L} \leq  \left(T_{g,\L_L}^{\mathcal D,\mathcal D}\right)^\beta.
\eeq
Using the fact that the discrete Fourier transform is a unitary operator, one can see that $\mathfrak{c}\left(T_{g}\right)^\beta=\mathfrak{c}T_{g^\beta}= T_{\mathfrak{c}g^\beta} =T_{f_{\alpha,E_1,...,E_M}}$, which yields the desired result.
\end{proof}
We obtain the following lower and upper bounds for the IDS of a function of type \eqref{eq:def-f}.
\begin{corollary}\label{upper_lower_bound}
For distinct $E_1,...,E_M$, $\alpha>0$, consider the functions $f_{\alpha,E_1,...,E_M}$ and $g$ given in \eqref{eq:def-f} and \eqref{eq:def-g} and let $N = M\overline \alpha $. Then, for all $L\geq 2N+1$, $\Lambda=[-L,L]\cap\mathbb Z$ we have
\beq\label{eq:DNbracketing-IDS}
\frac 1{|\Lambda|}\mathbb E\big(\Tr \1_{(-\infty,E]}\big(\mathfrak{c}(T^{\mathcal D,\mathcal D}_{g,\Lambda})^\beta + V_\omega \big)
\big)\leq
N_{ f_{\alpha,E_1,...,E_n}}(E)
\leq
\frac 1{|\Lambda|}\mathbb E\big(\Tr \1_{(-\infty,E]}\big(\mathfrak{c}(T^{\mathcal N,\mathcal N}_{g,\Lambda})^\beta + V_\omega \big)\big)
\eeq
\end{corollary}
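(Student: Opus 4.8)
The plan is to combine the operator-level Dirichlet--Neumann bracketing from Corollary \ref{upper_lower_bound} with the monotonicity of eigenvalue-counting functions under operator ordering, and then pass to the infinite-volume limit using the definition of the IDS together with the self-averaging established in Lemma \ref{lem:IDS}. More precisely, first I would add the random potential $V_\omega$ (restricted to $\Lambda_L$) to every term in the two-sided bound of Corollary \ref{cor:upp-low}. Since $V_\omega\restriction_{\Lambda_L}$ is the same diagonal operator in each case, the operator inequality
\beq
\mathfrak{c}\bigl(T_{g,\Lambda_L}^{\mathcal N,\mathcal N}\bigr)^\beta + V_\omega\restriction_{\Lambda_L}
\;\leq\;
1_{\Lambda_L} H_{f_{\alpha,E_1,\dots,E_M}} 1_{\Lambda_L}
\;\leq\;
\mathfrak{c}\bigl(T_{g,\Lambda_L}^{\mathcal D,\mathcal D}\bigr)^\beta + V_\omega\restriction_{\Lambda_L}
\eeq
holds $\mathbb P$-almost surely, because $1_{\Lambda_L}T_{f_{\alpha,E_1,\dots,E_M}}1_{\Lambda_L}$ is sandwiched by Corollary \ref{cor:upp-low} and adding a common self-adjoint operator preserves operator order.

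Next I would invoke the min-max principle: if $A\leq B$ as self-adjoint operators on the same finite-dimensional space, then $\lambda_k(A)\leq\lambda_k(B)$ for every $k$, hence $\Tr\1_{(-\infty,E]}(A)\geq \Tr\1_{(-\infty,E]}(B)$ for every $E\in\R$. Applying this to the displayed sandwich, dividing by $|\Lambda_L| = 2L+1$, and taking expectations yields, for every $L\geq 2N+1$,
\beq
\frac{1}{|\Lambda_L|}\,\mathbb E\!\left(\Tr\1_{(-\infty,E]}\bigl(\mathfrak{c}(T^{\mathcal D,\mathcal D}_{g,\Lambda_L})^\beta + V_\omega\bigr)\right)
\;\leq\;
\frac{1}{|\Lambda_L|}\,\mathbb E\!\left(\Tr\1_{(-\infty,E]}\bigl(1_{\Lambda_L}H_{f_{\alpha,E_1,\dots,E_M}}1_{\Lambda_L}\bigr)\right)
\;\leq\;
\frac{1}{|\Lambda_L|}\,\mathbb E\!\left(\Tr\1_{(-\infty,E]}\bigl(\mathfrak{c}(T^{\mathcal N,\mathcal N}_{g,\Lambda_L})^\beta + V_\omega\bigr)\right).
\eeq
Here the middle quantity is exactly the finite-volume approximation whose limit defines $N_{f_{\alpha,E_1,\dots,E_M}}(E)$; taking $L\to\infty$ and using that this limit exists and equals $N_{f_{\alpha,E_1,\dots,E_M}}$ (Lemma \ref{lem:IDS}, which relies on the off-diagonal decay \eqref{eq:offdiag-decay1} valid for $f_{\alpha,E_1,\dots,E_M}$ since it satisfies (A1)--(A4)) gives the claimed chain of inequalities, provided the outer limits also exist.

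The main obstacle — and the only genuinely nontrivial point — is justifying that the outer two limits in $L$ exist, or at least handling them correctly if one only wants the final two-sided bound for each fixed large $L$. Strictly speaking the statement of Corollary \ref{upper_lower_bound} as displayed is a bound valid for each fixed $L\geq 2N+1$, so no limit on the outer terms is actually needed: one simply fixes $L$, applies the pointwise operator bound and min-max, takes expectations, and observes that the middle term converges to $N_{f_{\alpha,E_1,\dots,E_M}}(E)$ as $L\to\infty$ while the outer terms are already the finite-$L$ quantities appearing in \eqref{eq:DNbracketing-IDS}. One should, however, be slightly careful that $V_\omega\restriction_{\Lambda_L}$ as it enters $1_{\Lambda_L}H_f1_{\Lambda_L}$ matches the $V_\omega$ in the bracketed operators (it does, being the diagonal matrix $\mathrm{diag}(\omega_{-L},\dots,\omega_L)$ in both), and that the block-diagonal structure used in Corollary \ref{cor:upp-low} is compatible with adding a diagonal potential (it is, since a diagonal operator respects the block decomposition $\Lambda_L = \Lambda_{L_1}\cup\Lambda_{L_2}$). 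A minor remaining check is measurability of $\omega\mapsto\Tr\1_{(-\infty,E]}(\cdot)$ so the expectations make sense, which follows from continuity of the finite-dimensional spectral projections in $\omega$ away from a Lebesgue-null set of $E$'s and a standard limiting argument, exactly as in \cite{GRM20}.
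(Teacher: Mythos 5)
There is a genuine gap in your argument. What you establish is the finite-$L$ sandwich
\begin{equation*}
\frac{1}{|\Lambda_L|}\,\mathbb E\Bigl(\Tr\1_{(-\infty,E]}\bigl(\mathfrak{c}(T^{\mathcal D,\mathcal D}_{g,\Lambda_L})^\beta + V_\omega\bigr)\Bigr)
\leq
\frac{1}{|\Lambda_L|}\,\mathbb E\Bigl(\Tr\1_{(-\infty,E]}\bigl(H_{f,\Lambda_L}\bigr)\Bigr)
\leq
\frac{1}{|\Lambda_L|}\,\mathbb E\Bigl(\Tr\1_{(-\infty,E]}\bigl(\mathfrak{c}(T^{\mathcal N,\mathcal N}_{g,\Lambda_L})^\beta + V_\omega\bigr)\Bigr),
\end{equation*}
where the middle term is the finite-volume simple-boundary-condition counting function, \emph{not} $N_f(E)$. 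The middle term merely \emph{converges} to $N_f(E)$ as $L\to\infty$; this does not allow you to slide $N_f(E)$ into the middle of the inequality for a fixed $L$, which is what the corollary asserts. (Compare: $a_L\leq b_L\leq c_L$ with $b_L\to b$ does not imply $a_L\leq b\leq c_L$ for each $L$.) Your remark that ``no limit on the outer terms is actually needed'' is exactly where the reasoning breaks: without further input the outer terms at fixed $L$ do not bound the infinite-volume quantity.

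The missing ingredient, and the actual content of the paper's proof, is monotonicity in $L$. Using the bracketing of Theorem 1.1 of \cite{G20}, namely $T^{\mathcal D,\mathcal D}_{g,\Lambda_L}\leq \bigoplus_j T^{\mathcal D,\mathcal D}_{g,\Lambda_j}$ and $T^{\mathcal N,\mathcal N}_{g,\Lambda_L}\geq \bigoplus_j T^{\mathcal N,\mathcal N}_{g,\Lambda_j}$ for a decomposition $\Lambda_L=\cup_j\Lambda_j$, together with operator monotonicity of $x\mapsto x^\beta$ and the fact that $V_\omega$ is diagonal (hence respects the block decomposition), one shows that $L\mapsto \frac{1}{|\Lambda_L|}\mathbb E\bigl(\Tr\1_{(-\infty,E]}(\mathfrak{c}(T^{\mathcal D,\mathcal D}_{g,\Lambda_L})^\beta+V_\omega)\bigr)$ is essentially nondecreasing and the $\mathcal N$ counterpart essentially nonincreasing (superadditivity/subadditivity of the expected eigenvalue counting functions, using i.i.d.\ invariance of the potential). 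This is what forces the finite-$L$ Dirichlet term to sit below the limit and the finite-$L$ Neumann term to sit above it; that both limits equal $N_f(E)$ then follows from your sandwich plus Corollary \ref{cor:upp-low}. Without this step, your argument proves only a sandwich of three finite-volume quantities, not the claimed bound with $N_f(E)$ in the middle.
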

\begin{proof}
   Note that for the modified boundary conditions, if we write the cube $\L_L$ as a disjoint union of smaller cubes $\L_j$, $\L_L=\cup_j \L_j$, we have $T^{\mathcal D,\mathcal D}_{L_L}\leq \bigoplus_j T^{\mathcal D,\mathcal D}_{L_j}$ and $T^{\mathcal N,\mathcal N}_{L_L}\geq \bigoplus_j T^{\mathcal N,\mathcal N}_{L_{j}}$ for $\L_L=\cup_j \L_j$. Therefore, for $\beta\in(0,1)$,
   \[ \left(T^{\mathcal D,\mathcal D}_{L_L}\right)^\beta\leq \bigoplus_j \left(T^{\mathcal D,\mathcal D}_{L_j}\right)^\beta  \quad \mbox{and}\quad \left(T^{\mathcal N,\mathcal N}_{L_L}\right)^\beta\geq \bigoplus_j \left(T^{\mathcal N,\mathcal N}_{L_{j}} \right)^\beta.  \]
   This, combined with Corollary \ref{cor:upp-low} gives the claim.
\end{proof}

\section{Upper bound on the IDS}\label{s:upp}
Let us recall that by Corollary \ref{corupplow} the IDS of the operator $T_f$ associated to a symbol $f$ satisfying Assumptions (A1)\,--\,(A4) in Section \ref{s:model} is bounded above by
\beq N_f(E)\leq N_{f_1}(E).
\eeq
$N_{f_1}$ is IDS of the operator $T_{f_1}$ associated to the symbol $f_1$ given by
\beq
f_1(x)=c\prod_{i=1}^M \big(2-2\cos(x-E_i)\big)^{b/2}
\eeq
where $b= \max_{1\leq i \leq M} \beta_i$, given in Assumption (A3) and $c>0$. Therefore $f_1$ is a function of type \eqref{eq:def-f} with $\alpha=b/2$. By Corollary \ref{upper_lower_bound}, the IDS $N_{f_1}$ is bounded above in \eqref{eq:DNbracketing-IDS} by the limit of the normalized eigenvalue counting function of the operator  $\big((T_{g,L})^{\mathcal N,\mathcal N}\big)^\beta$
\beq\label{eq:ub}
\frac 1{|\Lambda|}\mathbb E\big(\Tr \1_{(-\infty,E]}\big(c(T^{\mathcal N,\mathcal N}_{g,\Lambda})^\beta + V_\omega \big)\big)
\eeq
 where the symbol $g$ is related to $f_1$ by \eqref{eq:f-g}. We now proceed to obtain an upper bound on \eqref{eq:ub}
following the arguments in \cite[Section 6]{MR2509110}. The proof is given at the end of this section.

\subsection{Generalization of Temple's inequality}
One obstacle to continue with the standard proof of Lifshitz tails which uses Temple's inequality is that the degeneracy of the lowest eigenvalue of $T^{\mathcal N,\mathcal N}_{g,\Lambda}$. 

We first recall the following result from \cite{G20}, that gives a lower bound on the spectral gap above energy $0$ for the restriction of $T_g$ to a finite volume  $\Lambda_L=[-L,L]$ with modified Neumann boundary conditions defined in Section \ref{s:dn-bracketing}.

\begin{proposition}[Proposition 1.4 \cite{G20}]\label{prop:g20}
Let $M\in\mathbb N$, $E_1,...,E_M\in\mathbb T$ be distinct and $\alpha>0$. Let $g$ be of the form \eqref{eq:def-g} and let $N=M\overline\alpha$. We denote by $\lambda^{(L)}_1\leq ...\leq \lambda^{(L)}_{2L+1}$ the eigenvalues of $T_{g,L}^{\mathcal N,\mathcal N}$ ordered increasingly and counting multiplicity. Then $\lambda^{(L)}_k=0$ for $k=1,...,N$ and there exists $\tilde C>0$ such that for all $L\geq 2N+1$
\beq
\lambda^{(L)}_{N+1} \geq \frac {\tilde C}{L^{2\overline \alpha}}.
\eeq
\end{proposition}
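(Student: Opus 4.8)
The statement to prove is Proposition~\ref{prop:g20} (Proposition 1.4 of \cite{G20}): that the finite-volume restriction $T_{g,L}^{\mathcal N,\mathcal N}$ with modified Neumann boundary conditions has exactly an $N$-dimensional kernel and that the first nonzero eigenvalue obeys $\lambda^{(L)}_{N+1}\geq \tilde C L^{-2\overline\alpha}$. Since this is an external result quoted from \cite{G20}, I would not reprove it from scratch; instead I would sketch the mechanism. The function $g(x)=\prod_{i=1}^M(2-2\cos(x-E_i))^{\overline\alpha}$ is a nonnegative trigonometric polynomial vanishing to order $2\overline\alpha$ at each $E_i$, so $g(x)=|p(e^{ix})|^2$ for a polynomial $p$ of degree $N=M\overline\alpha$ with roots at the $e^{iE_i}$, each of multiplicity $\overline\alpha$. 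Thus $T_g = T_p^* T_p$ where $T_p$ acts on $\ell^2(\Z)$ as the banded (one-sided) convolution by the coefficients of $p$. The point of the modified Neumann boundary conditions $\mathcal N$ in \eqref{def:bc} is precisely that they are chosen so that $T_{g,L}^{\mathcal N,\mathcal N}$ factors as $\Pi_L T_p^* T_p \Pi_L$ restricted appropriately — equivalently, as $S^* S$ where $S$ is the $(\text{dim}-N)\times \text{dim}$ rectangular band matrix obtained by keeping only the "interior" rows of $T_p$ on $\Lambda_L$. Its kernel is then the $N$-dimensional space of sequences on $\Lambda_L$ annihilated by $S$, which are exactly the restrictions to $\Lambda_L$ of the solutions of the recursion $p(\text{shift})u=0$, i.e. spanned by $n\mapsto n^\ell e^{iE_i n}$, $0\le \ell<\overline\alpha$, $1\le i\le M$. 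That gives $\lambda^{(L)}_k=0$ for $k=1,\dots,N$.

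**The quantitative gap.** For the bound on $\lambda_{N+1}^{(L)}$ I would use the variational characterization
\[
\lambda^{(L)}_{N+1} = \min_{\substack{u\in\ell^2(\Lambda_L)\\ u\perp \Ker}} \frac{\langle u, T_{g,L}^{\mathcal N,\mathcal N}u\rangle}{\langle u,u\rangle} = \min_{u\perp\Ker}\frac{\|Su\|^2}{\|u\|^2},
\]
so it suffices to show $\|Su\|^2\geq \tilde C L^{-2\overline\alpha}\|u\|^2$ for every $u$ on $\Lambda_L$ orthogonal to the kernel. Equivalently, writing $v=Su$ (which ranges over all of $\mathbb C^{\,2L+1-N}$ as $u$ ranges over $(\Ker)^\perp$, $S$ being injective there and surjective onto that space by dimension count), this is the statement that the \emph{right inverse} of $S$ has operator norm $O(L^{\overline\alpha})$; by duality it is the same as a stability estimate for the banded operator $S^*$. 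The natural route is: decompose $u$ in the discrete Fourier basis on $\Lambda_L$, note $Su$ is, up to boundary corrections controlled by the choice of $\mathcal N$, the multiplication by a function behaving like $|p(e^{ik})|$ which vanishes like $|k-E_i|^{\overline\alpha}$ near each $E_i$, and use that the Fourier modes on a box of size $L$ are $2\pi/(2L+1)$-separated, so any mode landing "inside" the zero at $E_i$ still has $|k-E_i|\gtrsim 1/L$, giving $|p(e^{ik})|\gtrsim L^{-\overline\alpha}$; modes away from all $E_i$ contribute an $O(1)$ lower bound. The boundary terms from $\mathcal N$ are finite-rank and are exactly engineered in \cite{G20} so as not to destroy this, which is why the factorization structure is essential.

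**Main obstacle.** The delicate point is the interplay between the discrete Fourier basis on the finite box and the boundary-condition blocks $B_{\mathcal N},\widetilde B_{\mathcal N}$: the Fourier modes do not diagonalize $T_{g,L}^{\mathcal N,\mathcal N}$, and one must show that the off-diagonal/boundary pieces are a relatively bounded perturbation that cannot close the gap below the scale $L^{-2\overline\alpha}$. Concretely, I expect the hard work to be in establishing the injectivity-with-quantitative-lower-bound of $S$ on $(\Ker)^\perp$ uniformly in $L$ — i.e. showing the right inverse of the rectangular band matrix $S$ has norm growing no faster than $L^{\overline\alpha}$ — which in \cite{G20} is presumably done via an explicit analysis of the recursion $p(\text{shift})u=0$ and a Wronskian/transfer-matrix argument tracking how fast an "almost-solution" of this recursion can grow over a window of length $2L$. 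Everything else (the dimension count for the kernel, identifying the polynomial $p$, the min-max reduction) is routine once the factorization $T_{g,L}^{\mathcal N,\mathcal N}=S^*S$ from \cite{G20} is in hand.
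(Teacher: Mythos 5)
The paper does not give its own proof of Proposition~\ref{prop:g20}; it is imported from \cite[Prop.~1.4]{G20} and used as a black box, so there is no internal argument to compare your sketch against. Your structural picture is nevertheless consistent with everything the paper does quote from \cite{G20}: the Fej\'er--Riesz factorization $g(x)=|p(e^{ix})|^2$ with $\deg p=N$, the fact that the Neumann restriction factors (via the choice of boundary blocks $B_{\mathcal N}$) as $T_{g,L}^{\mathcal N,\mathcal N}=S^*S$ with $S$ a rectangular $(2L+1-N)\times(2L+1)$ band matrix, and the identification of $\ker S$ with the $N$-dimensional space of restrictions to $\Lambda_L$ of solutions of the recursion $p(\text{shift})u=0$ --- which is precisely the explicit kernel basis $\varphi^j_{k,L}(n)\propto n^je^{-inE_k}$ that the paper later recalls from \cite[Sec.~5]{G20} in \eqref{def:varphi_k1} (the sign of the exponent is a Fourier-convention discrepancy, not a substantive difference).

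For the quantitative gap $\lambda^{(L)}_{N+1}\geq\tilde C L^{-2\overline\alpha}$, your Fourier-mode heuristic captures the right scaling, since $|p(e^{ik})|^2\geq c L^{-2\overline\alpha}$ once $k$ is at distance at least $c'/L$ from every $E_i$; but, as you yourself acknowledge, the plane-wave basis does not diagonalize $T_{g,L}^{\mathcal N,\mathcal N}$, so the argument as stated does not close. What is actually needed is a lower bound, uniform in $L$, on the smallest nonzero singular value of the rectangular band matrix $S$ --- equivalently, the $O(L^{\overline\alpha})$ bound on the norm of the right inverse of $S$ on $(\ker S)^\perp$ --- and this is the genuine content of \cite[Prop.~1.4]{G20} to which your sketch ultimately defers. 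Deferring is appropriate here, since the paper does the same, but be explicit that this singular-value stability estimate is being imported from \cite{G20} rather than established by the Fourier-mode discussion.
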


If we denote by $\mu^{(L)}_1\leq ...\leq \mu^{(L)}_{2L+1}$ the eigenvalues of $c\big((T_{g,L})^{\mathcal N,\mathcal N}\big)^\beta$ ordered increasingly and counting multiplicity, then, by the Proposition above, $\mu^{(L)}_k=0$ for $k=1,...,N$ and there exists $C_0>0$ such that the spectral gap above the ground state energy satisfies
\beq\label{eq:gap-g}
\mu^{(L)}_{N+1} \geq \frac {C_0}{L^{2\alpha}}=\frac {C_0}{L^{b}}
\eeq
for all $L\geq 2N+1$.

For simplicity, we write $T_{L}^{\mathcal N} =  c\big((T_{g,L})^{\mathcal N,\mathcal N}\big)^\beta$ and denote by $E_0(T_L^{\mathcal N}+ \wtilde V_\omega\big)$ the ground state of $T_L^{\mathcal N}+ \wtilde V_\omega$.
We define
\beq\label{eq:wv}
\wtilde V_\omega := \min\big\{ \tilde c\Big(\frac{C_0}{L^{b}}\Big),V_\omega\big\} \leq V_\omega
\eeq
where $0<\tilde c<1$ is a constant to be determined later and, as before, $b= \max_{1\leq i \leq M} \beta_i$, given in Assumption (A4).

We have the bound
\beq\label{eq:bds-ub}
\frac 1{|\Lambda_L|}\mathbb E\big(\Tr \1_{(-\infty,E]}\big(T_L^{\mathcal D} + V_\omega \big)\big) \leq \mathbb P \big(E_0(T_L^{\mathcal N}+ V_\omega)<E\big)  \leq \mathbb P \big(E_0(T_L^{\mathcal N}+ \wtilde V_\omega)<E\big).
\eeq

\begin{theorem}[Generalization of Temple's inequality]\label{gen:temple}
Let $\wtilde V_\omega$ and $\tilde c$ be as in \eqref{eq:wv}, with $0<\tilde c < \frac 1 2$ and $L\in\N$ large enough. Then we obtain the lower bound
\beq
E_0(T_L^{\mathcal N}+ \wtilde V_\omega\big) + 6 \tilde c^2 \Big(\frac{C_0}{L^{b}}\Big)
\geq
(1-2 \tilde c)^2
\min_{\varphi\in \mathcal G, \|\varphi\|=1} \<\varphi,\wtilde V_\omega\varphi\>,
\eeq
where
\beq
\mathcal G:=\big\{\varphi\in \ell^2(\Lambda_L): T_L^{\mathcal N}\varphi = 0\big\}
\eeq
 is the ground-state space of the operator $T_L^{\mathcal N}$.
\end{theorem}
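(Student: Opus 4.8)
The plan is to mimic the classical derivation of Temple's inequality, but replacing the one-dimensional ground-state projection by the projection onto the $N$-dimensional kernel $\mathcal G$ of $T_L^{\mathcal N}$. Write $P$ for the orthogonal projection onto $\mathcal G$ and $Q = \id - P$; recall $T_L^{\mathcal N} P = 0$ and, by the spectral gap estimate \eqref{eq:gap-g}, $T_L^{\mathcal N} \geq \frac{C_0}{L^b} Q$ on $\ell^2(\Lambda_L)$. Pick $\psi \in \mathcal G$, $\|\psi\|=1$, to be an eigenvector of $P \wtilde V_\omega P$ on $\mathcal G$ achieving $\min_{\varphi \in \mathcal G, \|\varphi\|=1} \langle \varphi, \wtilde V_\omega \varphi\rangle =: v_\psi$, and use $\psi$ as the trial state: $E_0(T_L^{\mathcal N} + \wtilde V_\omega) \leq \langle \psi, (T_L^{\mathcal N}+\wtilde V_\omega)\psi\rangle = \langle\psi, \wtilde V_\omega \psi\rangle = v_\psi$. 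That is the easy direction, but what we actually need is the reverse inequality up to the error term, so the real work is a Temple-type \emph{lower} bound on $E_0$.

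First I would set $\lambda := E_0(T_L^{\mathcal N}+\wtilde V_\omega)$ and let $\phi$ be a corresponding normalized ground state, decomposed as $\phi = P\phi + Q\phi$. Since $\lambda \leq v_\psi \leq \tilde c \, C_0 / L^b$ (using $\wtilde V_\omega \leq \tilde c\,C_0/L^b$ pointwise, hence $v_\psi \le \tilde c\,C_0/L^b$), and since $T_L^{\mathcal N} + \wtilde V_\omega - \lambda \geq 0$, testing against $Q\phi$ and using the gap gives control $\|Q\phi\|^2 \leq$ (something like) $2\tilde c\,\|P\phi\|\,\|Q\phi\|$ after isolating the cross term $\langle P\phi, \wtilde V_\omega Q\phi\rangle$ and bounding $\|\wtilde V_\omega\|_\infty \leq \tilde c\,C_0/L^b$ against the gap $C_0/L^b$; this yields $\|Q\phi\| \leq 2\tilde c\,\|P\phi\|$ and hence $\|P\phi\|^2 \geq (1-2\tilde c)^{-2}$ is \emph{wrong-way}, so more carefully $\|P\phi\|^2 \geq 1/(1+4\tilde c^2)$ or, more usefully, $\|P\phi\| \geq 1 - 2\tilde c$ after a clean estimate. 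Then I would write
\[
\lambda = \langle\phi,(T_L^{\mathcal N}+\wtilde V_\omega)\phi\rangle \geq \langle\phi,\wtilde V_\omega\phi\rangle \geq \langle P\phi,\wtilde V_\omega P\phi\rangle + 2\,\Re\langle P\phi,\wtilde V_\omega Q\phi\rangle,
\]
bound the first term below by $\|P\phi\|^2 v_\psi \geq (1-2\tilde c)^2 v_\psi$ (this is where $v_\psi \geq 0$, i.e. $\wtilde V_\omega \geq 0$ from $\inf\supp P_0 = 0$, enters), and bound the cross term in absolute value by $\|\wtilde V_\omega\|_\infty \|Q\phi\| \leq (\tilde c\,C_0/L^b)(2\tilde c) = 2\tilde c^2 C_0/L^b$, absorbing a residual $\langle Q\phi, \wtilde V_\omega Q\phi\rangle \geq 0$ term harmlessly (drop it, since it is non-negative). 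Collecting constants, $2\Re\langle P\phi,\wtilde V_\omega Q\phi\rangle \geq -4\tilde c^2 C_0/L^b$ (with a little slack folded into the stated $6\tilde c^2$), which rearranges to the claimed
\[
E_0(T_L^{\mathcal N}+\wtilde V_\omega) + 6\tilde c^2\Big(\frac{C_0}{L^b}\Big) \geq (1-2\tilde c)^2 \min_{\varphi\in\mathcal G,\|\varphi\|=1}\langle\varphi,\wtilde V_\omega\varphi\rangle.
\]

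The main obstacle is getting the quantitative bound $\|Q\phi\| \leq 2\tilde c\,\|P\phi\|$ (equivalently a clean lower bound on the overlap $\|P\phi\|$ with $\mathcal G$) sharp enough that the constants work out, since the ground state of the perturbed operator need \emph{not} lie in $\mathcal G$ and the degeneracy means one cannot use the usual scalar Temple manipulation. The trick is that the perturbation $\wtilde V_\omega$ is deliberately truncated at the scale $\tilde c$ times the gap, so $\|\wtilde V_\omega\| / \mathrm{gap} \leq \tilde c$ is the small parameter driving everything; with $\tilde c < 1/2$ all the geometric-series-type estimates converge and the prefactor $(1-2\tilde c)^2$ together with the additive error $O(\tilde c^2 \cdot \mathrm{gap})$ emerge naturally. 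I would carry out the $Q$-projected estimate first, as the rest is then bookkeeping.
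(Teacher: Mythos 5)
Your argument is correct, and it takes a genuinely different route from the paper's. The paper proves Theorem \ref{gen:temple} by interpolation: it introduces the family $T_L^{\mathcal N}+t\wtilde V_\omega$ for $t\in[0,1]$, writes $\mu_1(1)=\int_0^1\langle\varphi_1(t),\wtilde V_\omega\varphi_1(t)\rangle\,\d t$ via the Feynman--Hellmann theorem, and then controls the misalignment $\|P^\perp P(t)\|$ uniformly in $t$ by the subspace perturbation ($\sin\Theta$) bound from \cite[Chapter VII, \S 4]{Bha97}, the small parameter being $\|\wtilde V_\omega\|/\mathrm{gap}\le\tilde c$. You work directly at $t=1$ with a normalized ground state $\phi$ of $T_L^{\mathcal N}+\wtilde V_\omega$: projecting the eigenvalue equation with $Q=\id-P$, taking the inner product with $Q\phi$, discarding $\langle Q\phi,\wtilde V_\omega Q\phi\rangle\ge 0$, and using the gap and $\lambda\le\|\wtilde V_\omega\|\le\tilde c\,C_0/L^b$ gives
\begin{equation*}
\Bigl(\tfrac{C_0}{L^b}-\lambda\Bigr)\|Q\phi\|\le\|\wtilde V_\omega\|\,\|P\phi\|,
\qquad\text{hence}\qquad
\|Q\phi\|\le\tfrac{\tilde c}{1-\tilde c}\,\|P\phi\|\le 2\tilde c\,\|P\phi\|.
\end{equation*}
Then $\|P\phi\|^2\ge 1/(1+4\tilde c^2)\ge(1-2\tilde c)^2$ and, since $\wtilde V_\omega\ge 0$ (so $v_\psi\ge 0$),
\begin{equation*}
\lambda\ge\langle\phi,\wtilde V_\omega\phi\rangle
\ge\|P\phi\|^2\,v_\psi-2\|\wtilde V_\omega\|\,\|P\phi\|\,\|Q\phi\|
\ge(1-2\tilde c)^2\,v_\psi-4\tilde c^2\tfrac{C_0}{L^b},
\end{equation*}
which implies the stated bound with $6\tilde c^2$ (and in fact the sharper constant $4\tilde c^2$). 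Your approach avoids the Feynman--Hellmann differentiability discussion and the operator-valued $\sin\Theta$ estimate, replacing them with Cauchy--Schwarz and the spectral gap alone, at the cost of having to massage the eigenvalue equation by hand rather than integrating a clean derivative formula. Both proofs hinge on the same mechanism---$\wtilde V_\omega$ is truncated so that $\|\wtilde V_\omega\|/\mathrm{gap}\le\tilde c$ controls everything---but yours is the more elementary and self-contained of the two.
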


\begin{proof}
Let $t\in [0,1]$ and denote by $\mu_1(t) \leq ...\leq \mu_{2L+1}(t)$ the eigenvalues of $T_L^{\mathcal N}+ t\wtilde V_\omega$ ordered increasingly and counting multiplicity. The eigenvalues $(\mu_k(t))_{k=1,...,2L+1}$ are piecewise differentiable \cite[Chap. XII]{RS78}, and we write, using the Feynman-Hellmann theorem, see e.g., \cite{IZ88} ,
\beq
\mu_k(1) = \int_0^1 \<\varphi_k(t),\wtilde V_\omega \varphi_k(t)\>,
\eeq
where $\varphi_k(t)$ is the respective normalized eigenvector of $\mu_k(t)$, chosen to be continuous in $t$. Let $P$ be the projection onto $\mathcal G$ and $P^\perp = 1-P$. For the modified Neumann boundary conditions we have $0 = \mu_1(0)= ... =\mu_{N}(0)$, see Proposition \ref{prop:g20}. We write $\varphi(t) = \varphi_1(t)$ for $k=1$ and $t\in [0,1]$ and obtain the lower bound
\begin{align}
\<\varphi(t),\wtilde V_\omega \varphi(t)\>
= &
\<(P+P^\perp)\varphi(t),\wtilde V_\omega(P+P^\perp) \varphi(t)\> \notag\\
\geq &
\<P\varphi(t),\wtilde V_\omega P\varphi(t)\>  - |\<P^\perp\varphi(t),\wtilde V_\omega P\varphi(t)\>| \notag\\
& - |\<P\varphi(t),\wtilde V_\omega P^\perp\varphi(t)\>| - |\<P^\perp\varphi(t),\wtilde V_\omega P^\perp\varphi(t)\>|.
\end{align}
We estimate
\begin{align}
|\<P^\perp\varphi(t),\wtilde V_\omega P\varphi(t)\>|&+|\<P\varphi(t),\wtilde V_\omega P^\perp\varphi(t)\>|+|\<P^\perp\varphi(t),\wtilde V_\omega P^\perp\varphi(t)\>|\notag\\
&\leq
 3\|P^\perp P(t)\| \|\wtilde V_\omega\| 
 \label{eq:2}
\end{align}
where $P(t)$ stands for the projection onto $\mathcal G(t)=\text{span}\big\{\varphi_k(t): \varphi_k(0) \in\mathcal G\big\}$. The spectral gap between the first $N= \dim \mathcal G$ eigenvalues of $T_L^{\mathcal N}+ t\wtilde V_\omega$ and the rest of the spectrum is at $t=0$ bigger or equal to $C_0/L^{b}$, by \eqref{eq:gap-g}. Therefore for any $t\in[0,1]$
\beq
\mu_{N+1} (T_L^{\mathcal N}+ t\wtilde V_\omega) -\mu_{N} (T_L^{\mathcal N}+ t\wtilde V_\omega) \geq \frac{C_0}{L^{b}} -  t\|\wtilde V_\omega\|\geq \frac{C_0}{L^{b}} -  \|\wtilde V_\omega\|.
\eeq
Using the subspace perturbation bound in \cite[Chapter VII, Sec. 4]{Bha97}, we obtain for $\tilde c\leq \frac 1 2 $
\beq
\|P^\perp P(t)\| \leq \frac{\|\wtilde V_\omega\|}{\frac{C_0}{L^{b}} -  \|\wtilde V_\omega\|}\leq \frac{\tilde c}{1-  \tilde c }\leq 2 \tilde c,
\eeq
where we used $\|\wtilde V_\omega\|\leq \tilde c \frac{C_0}{L^b}$. Inserting this bound in \eqref{eq:2}, we obtain for all $t\in[0,1]$
\begin{align}
\eqref{eq:2}
\leq
6 \tilde c \|\wtilde V_\omega\|
\end{align}
and therefore,
\beq
\<\varphi(t),\wtilde V_\omega \varphi(t)\>
\geq \<P \varphi(t) ,  \wtilde V_\omega P\varphi(t)\>
- 6 \tilde c \|\wtilde V_\omega\| .
\eeq
We further estimate
\beq
\<P \varphi(t),\wtilde V_\omega P  \varphi(t)\>
\geq
\|P \varphi(t)\|^2 \min_{\varphi\in \mathcal G , \|\varphi\|=1} \<\varphi,\wtilde V_\omega\varphi\> .
\eeq
As before we estimate
\begin{align}
\|P \varphi(t)\| = \|PP(t)  \varphi(t)\|
= \|(1-P^\perp)P(t) \varphi(t)\|
&\geq \|P(t) \varphi(t)\| - \|P^\perp P(t) \|\notag\\
&\geq  1 - 2\tilde c.
\end{align}
All together we have proved that
\begin{align}
E_0(T_L^{\mathcal N}+ \wtilde V_\omega\big)
= \mu_1(1)
&\geq
(1-2 \tilde c)^2
\min_{\varphi\in \mathcal G , \|\varphi\|=1} \<\varphi,\wtilde V_\omega\varphi\>
- 6 \tilde c \|\wtilde V_\omega\|  \notag\\
&
\geq
(1-2 \tilde c)^2
\min_{\varphi\in \mathcal G , \|\varphi\|=1} \<\varphi,\wtilde V_\omega\varphi\>
- 6 \tilde c^2\Big(\frac{C_0}{L^{b}}\Big)
\end{align}
where we again used $\|\wtilde V_\omega\|\leq \tilde c \frac{C_0}{L^{b}}$ for the last term.
This implies the assertion.
\end{proof}

In order to further estimate  $\displaystyle\min_{\varphi\in \mathcal G, \|\varphi\|=1} \<\varphi,\wtilde V_\omega\varphi\>$ from below, we first recall from \cite[Section 5]{G20} that the ground-state space $\mathcal G$ is given by
\beq
\mathcal G = \text{span} \big\{\varphi^j_{k,L}\in \ell^2(\Lambda_L):\ k=1,...,M,\,\,j=0,...,\overline \alpha-1  \big\}
\eeq
with
\begin{align}\label{def:varphi_k1}
\varphi^j_{k,L} := \frac 1 {K_{j,L}^{1/2}}( (-L)^j e^{i L  E_k },...,0^j,1^j e^{-i E_k},..., L^je^{-i L E_k})^T\in\R^{2L+1}
\end{align}
 for $k=1,...,M$ and $j=0,...,\overline \alpha-1$, and normalization
 \beq
 K_{j,L} = 2\sum_{m=-L}^L |m|^{2j}.
 \eeq
 In particular $K_{0,L} = |\Lambda_L| = 2L+1$ and $K_{j,L} = O(L^{2j+1})$ as $L\to\infty$. Note that there are $M\overline \alpha =N$ vectors, therefore the dimension of $\mathcal G=N$.

Note that in \cite[Section 5]{G20} results are stated for a one-sided Toeplitz matrix, while in our case we work with a two-sided Toeplitz matrix, therefore the notation in \eqref{def:varphi_k1} differs slightly from the one in \cite[Eq. 5.1]{G20}.

\begin{lemma}\label{lower_bound_lm2}
For $a\in(0,1)$ depending only on $N$, there exists $C_{a,N}\in(0,1)$ depending only on $a,N$, and $L_0\in\N$ such that for all $L\geq L_0$ and all $\varphi\in\mathcal G$
\beq\label{lm_lower_bound0}
S^a_L:=\#\big\{ l\in\Lambda_L :\  |\Lambda_L||\varphi(l)|^2 \geq a  \big\} \geq C_{a,N} |\Lambda_L|
\eeq
and
\beq\label{lm_lower_bound}
\min_{\varphi\in \mathcal G, \|\varphi\|=1} \<\varphi,\wtilde V_\omega\varphi\> \geq
\frac a {|\Lambda_L|} \sum_{k\in S^a_L} \wtilde V _\omega(k).
\eeq
\end{lemma}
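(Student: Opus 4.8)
The plan is to prove the two claims of Lemma~\ref{lower_bound_lm2} separately, deriving \eqref{lm_lower_bound} from \eqref{lm_lower_bound0} essentially for free and concentrating the work on \eqref{lm_lower_bound0}.

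\textbf{Reducing \eqref{lm_lower_bound} to \eqref{lm_lower_bound0}.} Fix a normalized $\varphi\in\mathcal G$. Since $\wtilde V_\omega$ is a diagonal operator with non-negative entries $\wtilde V_\omega(l)\geq 0$ (recall $\inf\supp P_0=0$, so $\wtilde V_\omega(l)=\min\{\tilde c C_0/L^b,\omega_l\}\geq 0$), we have $\<\varphi,\wtilde V_\omega\varphi\>=\sum_{l\in\Lambda_L}\wtilde V_\omega(l)|\varphi(l)|^2\geq\sum_{l\in S^a_L(\varphi)}\wtilde V_\omega(l)|\varphi(l)|^2$. On the set $S^a_L(\varphi)$ one has $|\varphi(l)|^2\geq a/|\Lambda_L|$ by definition, so this is at least $\frac{a}{|\Lambda_L|}\sum_{l\in S^a_L(\varphi)}\wtilde V_\omega(l)$. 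The only subtlety is that $S^a_L$ as written in the statement should be read as depending on $\varphi$; once \eqref{lm_lower_bound0} is established with a uniform constant $C_{a,N}$, the bound \eqref{lm_lower_bound} holds for every $\varphi\in\mathcal G$ with the sum over the corresponding $S^a_L(\varphi)$, which is what is used downstream. (If a $\varphi$-independent index set is genuinely wanted, one notes the estimate still goes through after replacing $S^a_L$ by any subset of it of size $\geq C_{a,N}|\Lambda_L|$; but I expect the $\varphi$-dependent reading to suffice.)

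\textbf{Proving the mass spread-out bound \eqref{lm_lower_bound0}.} This is the heart of the matter. Write $\varphi=\sum_{k=1}^M\sum_{j=0}^{\overline\alpha-1}c_{k,j}\varphi^j_{k,L}$ with $\sum|c_{k,j}|^2$ controlled by $\|\varphi\|^2=1$ up to the near-orthogonality of the basis vectors \eqref{def:varphi_k1}. Using the explicit form \eqref{def:varphi_k1}, for $l\in\Lambda_L$,
\[
|\Lambda_L|\,\varphi(l) \;=\; \sum_{k,j} c_{k,j}\,\frac{(2L+1)^{1/2}}{K_{j,L}^{1/2}}\, l^{j}\, e^{-ilE_k},
\]
so $|\Lambda_L||\varphi(l)|^2$ is, up to the normalization factors $(2L+1)/K_{j,L}=O(L^{-2j})$, a trigonometric-polynomial-type expression in $l$ with frequencies $E_1,\dots,E_M$ and polynomial weights $l^j$. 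The strategy is a \emph{quantitative unique continuation / anti-concentration} argument: a nonzero element of the $N$-dimensional space $\mathcal G$ cannot be small on more than a fixed fraction of $\Lambda_L$. Concretely, I would rescale $l=Lt$, $t\in[-1,1]\cap\tfrac1L\mathbb Z$, so that $L^{-j}l^j=t^j$ and $|\Lambda_L||\varphi(l)|^2$ becomes (up to bounded factors and $1+o(1)$ corrections in the normalizations) $|\Psi(t)|^2$ where $\Psi(t)=\sum_{k,j}\tilde c_{k,j}t^j e^{-iLtE_k}$ with $\sum|\tilde c_{k,j}|^2\asymp 1$. Then $\int_{-1}^1|\Psi(t)|^2\,dt$ is bounded below by a constant depending only on $N$ (this is where near-orthogonality in $L^2(-1,1)$ of the functions $t^je^{-iLtE_k}$, for distinct $E_k$ and as $L\to\infty$, is used — a Riemann--Lebesgue/stationary-phase estimate), while $\|\Psi\|_{L^\infty(-1,1)}$ is bounded above by a constant depending only on $N$ (Cauchy--Schwarz plus $|t|\le1$). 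A reverse-Chebyshev/Paley--Zygmund argument then shows $|\{t:|\Psi(t)|^2\ge a\}|\ge C_{a,N}>0$ for $a$ small enough depending only on $N$, and discretizing back to the lattice $\tfrac1L\mathbb Z$ (the Riemann-sum error is $o(1)$ since $\Psi$ has derivative $O(L)$, giving an $O(1)$ error per unit $t$-length after multiplying by spacing $1/L$) yields $S^a_L\ge C_{a,N}|\Lambda_L|$ for $L\ge L_0$.

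\textbf{Main obstacle.} The delicate point is making the near-orthogonality and Riemann-sum estimates uniform in the coefficient vector $(\tilde c_{k,j})$ while keeping all constants depending only on $N$ (equivalently on $M$ and $\overline\alpha$, hence on $\alpha,M$), not on $L$ or on the specific $E_k$. The distinctness of $E_1,\dots,E_M$ enters only through the fact that $e^{iL(E_k-E_{k'})t}$ oscillates for $k\neq k'$, so the Gram matrix of the $N$ functions $\{t^je^{-iLtE_k}\}$ converges as $L\to\infty$ to the (invertible, $N$-dependent) Gram matrix of $\{t^j\}_{j=0}^{\overline\alpha-1}$ repeated in a block-diagonal fashion across the $M$ frequencies; uniform invertibility of this limit gives the lower bound $\int|\Psi|^2\gtrsim_N\sum|\tilde c_{k,j}|^2$ for $L$ large. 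I expect this Gram-matrix convergence, together with the passage from the continuous $L^\infty$/anti-concentration bound to the lattice count, to be the only steps requiring genuine care; everything else is bookkeeping with the explicit vectors \eqref{def:varphi_k1}.
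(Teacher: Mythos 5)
Your deduction of \eqref{lm_lower_bound} from \eqref{lm_lower_bound0} is correct and essentially identical to the paper's. For \eqref{lm_lower_bound0} you also correctly identify the two ingredients the paper uses: an $\ell^\infty$ upper bound on the ground-state vectors (following from near-orthogonality of the explicit basis $\varphi^j_{k,L}$, which is the content of Lemma~\ref{lem:upp-bound-phi}) and a reverse-Chebyshev/Paley--Zygmund mechanism that converts an $L^2$ normalization plus a pointwise cap into a mass-spreading statement. However, your rescaling $l = Lt$ and passage to the continuum introduces a genuine gap at the final ``discretizing back to the lattice'' step. A measure bound $\lvert\{t\in[-1,1]:\ |\Psi(t)|^2\geq a\}\rvert \geq C_{a,N}$ does \emph{not} directly yield a count bound $\#\{l\in\Lambda_L:\ |\Lambda_L||\varphi(l)|^2\geq a\}\geq C'_{a,N}|\Lambda_L|$: one would need to know that the continuous super-level set is not concentrated between lattice points, i.e.\ is spread over scales much larger than $1/L$, which the oscillatory factors $e^{-iLtE_k}$ make non-trivial. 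Moreover, your offered heuristic ``$\Psi'=O(L)$, spacing $1/L$, hence error $O(1)$'' yields a bounded, not vanishing, Riemann-sum error, and in any case a Riemann-sum estimate for $\tfrac1L\sum_l |\Psi(l/L)|^2$ versus $\int_{-1}^1|\Psi|^2$ is a different statement from the transfer of a super-level-set measure to a super-level-set count.

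The paper sidesteps all of this by never leaving the lattice. Lemma~\ref{lem:upp-bound-phi} gives $|\varphi(l)|^2 \le 2N/|\Lambda_L|$ uniformly in $l\in\Lambda_L$ and in normalized $\varphi\in\mathcal G$, for $L\ge L_0$. Writing $\Lambda_L = S^a_L\cup W^a_L$ with $\#S^a_L = \delta|\Lambda_L|$, one estimates
\[
1=\sum_{l\in\Lambda_L}|\varphi(l)|^2 \le \frac{2N}{|\Lambda_L|}\,\delta|\Lambda_L| + \frac{a}{|\Lambda_L|}(1-\delta)|\Lambda_L| = 2N\delta + (1-\delta)a,
\]
whence $\delta \ge \frac{1-a}{2N-a} =: C_{a,N}$. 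This is precisely the Paley--Zygmund argument you had in mind, applied directly to the discrete quantities $\sum_l|\varphi(l)|^2=1$ and $\max_l|\Lambda_L||\varphi(l)|^2\le 2N$, with no continuum approximation and no Riemann-sum control needed. If you want to salvage your write-up, replace the continuum step by this direct lattice counting; everything else you wrote (including the $L^\infty$ bound via Cauchy--Schwarz and the Gram-matrix control of the coefficients, which is where distinctness of the $E_k$ enters) then lands exactly on the paper's argument.
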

\begin{proof}
For fixed $a\in(0,1)$ possibly depending on $N$, we consider the sets
\[S^a_L=\#\big\{ l\in\Lambda_L :\  |\Lambda_L||\varphi(l)|^2 \geq a  \big\},\quad W^a_L:=\#\big\{ l\in\Lambda_L :\  |\Lambda_L||\varphi(l)|^2 < a  \big\}.\]
We have that their disjoint union is $\Lambda_L=S^a_L\cup W^a_L$ and there exists $\delta\in (0,1)$ such that $\sharp S^a_L=\delta\abs{\Lambda_L}$ and $\sharp W^a_L=(1-\delta)\abs{\Lambda_L}$. Otherwise we would have that either $\sharp S^a_L=\abs{\Lambda_L}$ or $\sharp S^a_L=0$, which together with the fact $\sum_{l\in\Lambda_L}\abs{\varphi(l)}^2=1$, leads to a contradiction.

By Lemma \ref{lem:upp-bound-phi}, we have that $\abs{\varphi(l)}^2\leq 2N/\abs{\Lambda_L}$, which implies the following
\begin{align}
1=\sum_{l\in\Lambda_L}\abs{\varphi(l)}^2 &\leq \sum_{l\in S^a_L}\abs{\varphi(l)}^2+ \sum_{l\in W^a_L}\abs{\varphi(l)}^2 \\
& \leq \frac{2N}{\abs{\Lambda_L}}\delta\abs{\Lambda_L} + \frac{a}{\abs{\Lambda_L}} (1-\delta)\abs{\Lambda_L}\\
& = 2N\delta + (1-\delta)a
\end{align}
This yields
\[ \delta \geq \frac{1-a}{2N-a}:=C_{a,N}.\]
Now inequality \eqref{lm_lower_bound} follows directly form \eqref{lm_lower_bound0}.
\end{proof}

\subsection{Upper bound}
Now we are ready to prove the upper bound on \eqref{eq:ub}.
\begin{lemma}\label{lem:upper-bound}
With the choice $L = \ceiling{\gamma E^{-\frac 1 {b}}}$ for some $\gamma>0$, we have the existence of a positive constant $C_2$ such that
\beq
\frac 1{|\Lambda|}\mathbb E\big(\Tr \1_{(-\infty,E]}\big((T^{\mathcal N,\mathcal N}_{g,\Lambda})^\beta + V_\omega \big)\big) \leq e^{-C_2E^{-\frac{1}{b}}}
\eeq
\end{lemma}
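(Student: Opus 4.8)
The plan is to follow the classical Lifshitz-tail upper-bound scheme from \cite[Section 6]{MR2509110}, using the bound \eqref{eq:bds-ub} to reduce the trace estimate to a probability:
\[
\frac 1{|\Lambda_L|}\mathbb E\big(\Tr \1_{(-\infty,E]}\big((T^{\mathcal N,\mathcal N}_{g,\Lambda_L})^\beta + V_\omega \big)\big) \leq \mathbb P \big(E_0(T_L^{\mathcal N}+ \wtilde V_\omega)<E\big),
\]
with the truncated potential $\wtilde V_\omega$ from \eqref{eq:wv}. First I would apply Theorem \ref{gen:temple} (the generalized Temple inequality) to get, for $L$ large,
\[
E_0(T_L^{\mathcal N}+ \wtilde V_\omega) \geq (1-2\tilde c)^2 \min_{\varphi\in\mathcal G,\|\varphi\|=1}\<\varphi,\wtilde V_\omega\varphi\> - 6\tilde c^2 \frac{C_0}{L^{b}},
\]
and then feed in Lemma \ref{lower_bound_lm2} to bound the ground-state-space minimum from below by $\frac{a}{|\Lambda_L|}\sum_{k\in S^a_L}\wtilde V_\omega(k)$, a sum over a deterministic-size set $S^a_L$ of cardinality at least $C_{a,N}|\Lambda_L|$ (note that $S^a_L$ itself is random, but its size is bounded below uniformly; one has to be slightly careful here, see below).

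Next, choose the scaling $L=\ceiling{\gamma E^{-1/b}}$ so that $C_0/L^{b}$ is comparable to $E$. Picking $\gamma$ large enough (equivalently $\tilde c$ small enough) makes the negative term $6\tilde c^2 C_0/L^{b}$ and the truncation level $\tilde c C_0/L^b$ both a small multiple of $E$, so that the event $\{E_0 < E\}$ forces $\frac{a(1-2\tilde c)^2}{|\Lambda_L|}\sum_{k\in S^a_L}\wtilde V_\omega(k) < (1+6\tilde c^2/\,\cdot\,)E \lesssim E$, i.e. the average of the i.i.d.\ truncated variables $\wtilde V_\omega(k)$ over a set of size $\geq C_{a,N}|\Lambda_L|$ is smaller than a constant times $E$. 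Since $\inf\supp P_0 = 0$ and $P_0$ is nontrivial, there is a threshold $\epsilon_0>0$ with $p_0:=P_0([\epsilon_0,\infty))>0$; for $E$ small the truncation level exceeds $\epsilon_0$, so each $\wtilde V_\omega(k)$ independently exceeds $\epsilon_0$ with probability $p_0$. The average being below $cE$ then forces at least $(1-cE/\epsilon_0)\cdot C_{a,N}|\Lambda_L|$ of these variables to lie below $\epsilon_0$; a standard large-deviation (Chernoff) estimate for i.i.d.\ Bernoulli$(p_0)$ variables bounds this probability by $e^{-c'|\Lambda_L|} = e^{-c''E^{-1/b}}$, which is the claimed bound with $C_2=c''$.

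The main obstacle is the handling of the \emph{randomness of the index set $S^a_L$}: in Lemma \ref{lower_bound_lm2} the set $S^a_L$ depends on $\varphi\in\mathcal G$ and the minimizing $\varphi$ depends on $\omega$, so one cannot naively treat $\sum_{k\in S^a_L}\wtilde V_\omega(k)$ as a sum of i.i.d.\ variables over a fixed set. The clean way around this is to bound below uniformly over $\varphi$: since every admissible $S^a_L$ has size $\geq C_{a,N}|\Lambda_L|$ and the summands are nonnegative, $\min_{\varphi}\sum_{k\in S^a_L(\varphi)}\wtilde V_\omega(k)$ is at least the sum of the $\lceil C_{a,N}|\Lambda_L|\rceil$ \emph{smallest} values of $\{\wtilde V_\omega(k):k\in\Lambda_L\}$. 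The event that this ``sum of smallest values'' is below $cE|\Lambda_L|$ is an event about order statistics of i.i.d.\ variables, and it is contained in the event that at least $(1-cE/\epsilon_0)C_{a,N}|\Lambda_L|$ of the $|\Lambda_L|$ variables fall below $\epsilon_0$ — to which the Chernoff bound applies directly. The remaining steps (verifying $L\geq 2N+1$ for small $E$, checking that $\tilde c<1/2$ can be arranged together with $L=\ceiling{\gamma E^{-1/b}}$, and absorbing constants) are routine.
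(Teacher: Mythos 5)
Your plan reproduces the paper's proof exactly at the level of structure: pass from the trace to a probability via \eqref{eq:bds-ub}, lower-bound the ground-state energy by combining Theorem \ref{gen:temple} with Lemma \ref{lower_bound_lm2}, choose $L=\ceiling{\gamma E^{-1/b}}$, and finish with a large-deviation estimate (the paper simply cites \cite[Lemma 6.4]{MR2509110} for that last step). Your explicit treatment of the randomness of the index set $S^a_L$—replacing the $\varphi$-dependent sum by the sum of the $\lceil C_{a,N}|\Lambda_L|\rceil$ smallest values of $\{\wtilde V_\omega(k)\}_{k\in\Lambda_L}$—is a welcome, careful addition that the paper glosses over.

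However, the parameter bookkeeping at the end contains a real error. With $L=\ceiling{\gamma E^{-1/b}}$ the truncation level $\tilde c\,C_0/L^{b}$ is of order $\tilde c\,C_0\,\gamma^{-b}E$, so it tends to $0$ as $E\downarrow 0$ and for small $E$ lies \emph{below} any fixed $\epsilon_0>0$, not above it. Consequently $\wtilde V_\omega(k)$ can never exceed $\epsilon_0$: it is capped at $\tilde c\,C_0/L^{b}$. The Bernoulli event you actually want is $\{\wtilde V_\omega(k)=\tilde c\,C_0/L^{b}\}$, equivalently $\{V_\omega(k)\geq \tilde c\,C_0/L^{b}\}$, which has probability $P_0\big([\tilde c\,C_0/L^{b},\infty)\big)\geq P_0([\epsilon_0,\infty))=p_0$ once $\tilde c\,C_0/L^{b}<\epsilon_0$. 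The direction of the tuning is also reversed: choosing $\gamma$ large makes the truncation level (and hence $\E[\wtilde V_\omega]\gtrsim p_0\,\tilde c\,C_0\gamma^{-b}E$) a \emph{small} multiple of $E$, while the Temple threshold stays of order $E/(a\delta(1-2\tilde c)^2)$, so the event ceases to be a large deviation. One should instead take $\gamma$ \emph{small} (the paper does this), so that $C_0/L^{b}\gg E$, and—separately—take $\tilde c$ small enough that the Temple error $6\tilde c^2 C_0/L^{b}$ is a small fraction of the mean $\approx p_0\,\tilde c\,C_0/L^{b}$; "$\gamma$ large" and "$\tilde c$ small" are not equivalent knobs, and only the latter, combined with $\gamma$ small, puts the Temple threshold below $\E[\wtilde V_\omega]$ so that Cram\'er/Chernoff yields the claimed $e^{-c|\Lambda_L|}=e^{-C_2E^{-1/b}}$. (The paper itself fixes $\tilde c=\tfrac14$ before taking $\gamma$ small and defers the remaining arithmetic to the cited lemma; if you check the constants you will find that $\tilde c$ should in fact be taken small relative to $p_0\,a\,\delta$.)
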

\begin{proof}
We need to find an upper bound for the r.h.s of \eqref{eq:bds-ub}.
We fix $a\in(0,1)$ depending only on $N$. Combining Theorem \ref{gen:temple} and Lemma \ref{lower_bound_lm2} we obtain
\begin{align}
\mathbb P \big(E_0(T_L^{\mathcal N}+ \wtilde V_\omega)<E\big)
&\leq
\mathbb P \left(\min_{\varphi\in \mathcal G_L,\|\varphi\|=1} \<\varphi,\wtilde V_\omega\varphi\> <
\frac 1 {(1-2\tilde c)^2}\Big( E + 6 \tilde c^2 \Big(\frac{C_0}{L^{b}}\Big)\Big) \right)\notag\\
&\leq
\mathbb P \left(\frac a {|\Lambda_L|} \sum_{k\in S^a_L} \wtilde V _\omega(k) <
\frac 1 {(1-2\tilde c)^2}\Big( E + 6 \tilde c^2 \Big(\frac{C_0}{L^{b}}\Big)\Big) \right)
\end{align}
with $\tilde c\in(0,1/2)$ as in Theorem \ref{gen:temple}. We fix $\tilde c = \frac 1 4$. Then we aim at estimating
\beq\label{eq:pot}
\mathbb P \left(\frac 1 {|\Lambda_L|} \sum_{k\in S^a_L} \wtilde V _\omega(k) <
\frac 4 a\Big( E +  \frac 3 8 \frac{C_0}{L^{b}}\Big) \right)
\eeq
By choosing
\beq
L = \ceiling{\gamma E^{-\frac 1 {b}}}
\eeq
for some $\gamma>0$ small, \eqref{eq:pot} is bounded above by

\beq\label{eq:ld}
\mathbb P \left(\frac{1}{\abs{\Lambda_L}} \sum_{k\in S_L^a} \wtilde V _\omega(k) <
\frac 4 a E\Big( 1+ \frac{3}{8}\frac{C_0}{\gamma^b}\Big) \right)
= \mathbb P \left(\frac {1}{\delta\abs{\Lambda_L}} \sum_{k\in S_L^a} \wtilde V _\omega(k) <
C_{\delta,a,\gamma,b,C_0} E \right)
\eeq
where $\delta\in(0,1)$, given in Lemma \ref{lower_bound_lm2}, such that $\delta \abs{\Lambda_L}=\sharp S_L^a$, and $C_{\delta,a,\gamma,b,C_0}= \frac{4}{a\delta } \left( 1+ \frac{3}{8}\frac{C_0}{\gamma^b} \right)$.
The term in the r.h.s of \eqref{eq:ld} can be estimated by a large deviation estimate, as in \cite[Lemma 6.4]{MR2509110}, which yields the existence of constants $C_1>$ and $C_2>$, depending on $a,N,\gamma,b,C_0$.  such that

\begin{align}
\mathbb P \left(\frac {1}{\delta \abs{\Lambda_L}} \sum_{k\in S_L^a} \wtilde V _\omega(k) <
C_{\delta,a,\gamma,b,C_0} E \right) & \leq e^{-C_1\delta \abs{\Lambda_L}}\\
& \leq e^{-C_2 E^{-\frac{1}{b}}}
\end{align}
where we used the fact that $\delta\geq C_{a,N}>0$. Plugging this in \eqref{eq:bds-ub} gives the claim.
\end{proof}

\section{Lower bound on the IDS}\label{s:low}
Let us recall that by Corollary \ref{corupplow}, for the IDS of the operator $T_f$ associated with a symbol $f$ satisfying Assumptions (A1)--(A4) in Section \ref{s:model}, there exists a constant $c>0$ such that for all $E\in\mathbb R$,
\beq N_{f_2}(E)\leq N_{f}(E)
\eeq
where $N_{f_2}$ is the IDS of the operator $T_{f_2}$ associated to a symbol $f_2$ given by
\beq
f_2(x)=C\big(2-2\cos(x-E_{i_0})\big)^{b/2}
\eeq
on $[-\pi,\pi)$, where $b= \max_{1\leq i \leq M} \beta_i$ and $i_0\in\{i:\beta_i=b \}$, see Assumption (A3).

\begin{lemma}\label{lem:equiv}
Let $f_2 = C \big(2-2\cos(x - E_{i_0})\big)^{b/2}$ and $f_3 = C\big(2-2\cos(x)\big)^{b/2}$, for $x\in[-\pi,\pi)$. Then for all $E\in\R$ the IDS of $f_2$ and $f_3$ satisfy
\beq
N_{f_2}(E) = N_{f_3}(E)
\eeq
\end{lemma}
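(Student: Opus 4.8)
The plan is to exploit the translation invariance, more precisely the rotation invariance of the torus, to reduce $f_2$ to $f_3$. The symbols $f_2$ and $f_3$ are related by a shift in the argument: $f_2(x) = f_3(x - E_{i_0})$. On the Fourier side this shift corresponds to conjugating the Laurent operator $T_{f_2}$ by the unitary multiplication operator (or rather, its preimage under $\mathcal F$) that implements the rotation $k \mapsto k - E_{i_0}$ on $L^2(\mathbb T)$. Concretely, if $U$ denotes multiplication by $e^{\i n E_{i_0}}$ on $\ell^2(\mathbb Z)$, i.e. $(Ux)_n = e^{\i n E_{i_0}} x_n$, then a direct computation with the Fourier coefficients $a_n$ shows $U T_{f_3} U^* = T_{f_2}$ (the Fourier coefficients of $f_2$ are $e^{-\i n E_{i_0}}$ times those of $f_3$). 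Hence $T_{f_2}$ and $T_{f_3}$ are unitarily equivalent.

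Next I would transfer this equivalence to the finite-volume restrictions with simple boundary conditions. Since $U$ is diagonal in the canonical basis $(\delta_n)$, it commutes with the projection $\1_{\Lambda_L}$, so $U (\1_{\Lambda_L} T_{f_3} \1_{\Lambda_L}) U^* = \1_{\Lambda_L} T_{f_2} \1_{\Lambda_L}$, that is $T_{f_3,L}$ and $T_{f_2,L}$ are unitarily equivalent and therefore have the same eigenvalues counted with multiplicity. By the definition \eqref{eq:IDS-Tf} of the IDS as the limit of the normalized eigenvalue counting function, this immediately gives equality of the (free) IDS; but for the statement of the lemma we need the IDS $N_{f_2}, N_{f_3}$ of the \emph{random} operators $H_{f_2} = T_{f_2} + V_\omega$ and $H_{f_3} = T_{f_3} + V_\omega$. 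Here one uses that $U V_\omega U^* = V_\omega$ as well, because $V_\omega$ is diagonal in $(\delta_n)$ and $U$ is too: conjugation by a diagonal unitary leaves every diagonal operator invariant. Therefore $U H_{f_3,L} U^* = H_{f_2,L}$ for every $L$ and every realization $\omega$, so the eigenvalues $\mu_j$ of the two finite-volume random operators coincide pathwise, and passing to the limit in $N_f(E) = \lim_{L\to\infty} \#\{j : \mu_j \le E\}/(2L+1)$ yields $N_{f_2}(E) = N_{f_3}(E)$ for all $E \in \mathbb R$.

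There is essentially no obstacle here; the only point requiring a line of care is the verification that conjugation by the diagonal unitary $U$ sends $T_{f_3}$ to $T_{f_2}$, which is the identity $e^{\i n E_{i_0}} a^{(3)}_{m-n} e^{-\i m E_{i_0}} = e^{-\i (m-n) E_{i_0}} a^{(3)}_{m-n} = a^{(2)}_{m-n}$ at the level of matrix entries, together with the elementary observation that $U$ fixes both $\1_{\Lambda_L}$ and $V_\omega$. Alternatively, and perhaps cleaner for the write-up, one can avoid the operator $U$ entirely and argue directly that the eigenvalue counting function is unchanged: $T_{f_2}$ is diagonalized by $\mathcal F$ to multiplication by $f_2$ on $L^2(\mathbb T)$, and the rotation $R_{E_{i_0}}: L^2(\mathbb T) \to L^2(\mathbb T)$, $(R_{E_{i_0}} \psi)(k) = \psi(k + E_{i_0})$, is unitary and intertwines multiplication by $f_2$ with multiplication by $f_3$; one then checks that $\mathcal F^* R_{E_{i_0}} \mathcal F$ is precisely the diagonal unitary $U$ above, which again commutes with $\1_{\Lambda_L}$ and with $V_\omega$. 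Either way the proof is a two-step unitary conjugation argument combined with the definition of the IDS.
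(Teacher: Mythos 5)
Your proof is correct and follows essentially the same route as the paper: conjugation by the diagonal phase unitary $U$ with $(Ux)_n = e^{\i n E_{i_0}} x_n$, together with the observations that $U$ fixes $V_\omega$ and $\1_{\Lambda_L}$. The paper phrases the final step via the trace representation $N_f(E)=\lim_L \Tr(\1_{\Lambda_L}\1_{\le E}(H_f)\1_{\Lambda_L})/|\Lambda_L|$ and cyclicity of the trace, whereas you pass through the finite-volume eigenvalue counting function, but these are equivalent and both amount to the same conjugation argument.
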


\begin{proof}
We write $E=E_{i_0}$ and define the unitary $U:\ell^2(\Z)\to\ell^2(\Z)$
\beq
(Ux)(n) = e^{i E n } x(n).
\eeq
Then the symbol of $U^* T_{f_2} U$ is $f_2(\cdot + E) = f_3(\cdot)$ and since $V_\omega$ is diagonal
\beq
U^* H_{f_2} U = U^* (T_{f_2}+ V_\omega) U = T_{f_3}+ V_\omega.
\eeq
We denote by $1_{\Lambda_L}$ the characteristic function of $\Lambda_L=[-L,L]\cap\mathbb Z$, with $L\in\mathbb N$. Then for all $E\in\R$,
\begin{align}
N_f(E) = \lim_{L\to\infty} \frac{\Tr(1_{\Lambda_L} 1_{\leq E} (H_{f_2}) 1_{\Lambda_L})}{|\Lambda_L|}
=&
\lim_{L\to\infty} \frac{\Tr(U^*1_{\Lambda_L} U U^*1_{\leq E} (H_{f_2}) U U^*1_{\Lambda_L} U)}{|\Lambda_L|}\notag\\
=&
\lim_{L\to\infty} \frac{\Tr(1_{\Lambda_L} 1_{\leq E} (H_{f_3}) 1_{\Lambda_L} )}{|\Lambda_L|} = N_{f_3}(E),
\end{align}
where we used that $U^* 1_{\Lambda_L}  U = 1_{\Lambda_L} $, $ 1_{\leq E} ( U^*H_{f_3}U)=U^*1_{\leq E} (H_{f_3})U $ and the cyclicity of the trace.
\end{proof}

The above lemma implies that in order to bound from below the IDS of $T_{f_2}$, it is enough to bound from below the IDS of $T_{f_3}$, where $f_3$ is a function of type \eqref{eq:def-f} with $\alpha=b/2$ and $M=1$ there. We write

\[f_3=C\big(2-2\cos(x)\big)^{b/2}=C\big(\big(2-2\cos(x)\big)^{\tilde \alpha}\big)^{\tilde\beta}=C\tilde g(x)^{\tilde\beta}\]
with $\tilde \alpha:=\min\{n\in\N: b/2\leq n\}\in\N$, $\tilde\beta:= b/2\tilde\alpha\leq 1$ and
\[\tilde g(x)= \big(2-2\cos(x)\big)^{\tilde \alpha}.\]
Note that $\tilde g$ is the symbol of $(-\Delta)^{\tilde\alpha}$ with $\tilde \alpha\in\mathbb N$, which corresponds to a banded Laurent matrix with band width $2\tilde\alpha+1$.

Corollary \ref{upper_lower_bound} implies
\beq
N_{f_3}(E) \geq \frac 1{|\Lambda_L|}\mathbb E\big(\Tr \1_{(-\infty,E]}\big(C(T^{\mathcal D,\mathcal D}_{\tilde g,L})^{\tilde\beta} + V_\omega \big)\big)
\eeq
for all $L\geq 2\tilde \alpha+1$.
\begin{lemma}\label{lem:lowerbd}
We assume that the single-site probability distribution satisfies  $P_0([0,\epsilon))\geq C \epsilon^\kappa$ for some $C,\kappa>0$. By taking $L=\gamma'E^{-\frac{1}{b}}$ for some $\gamma>0$, there are positive constants $c_1,c_2$ such that
\beq \E\big( \Tr \1_{(-\infty,E]}\big(C(T^{\mathcal D,\mathcal D}_{\tilde g,L})^{\tilde\beta} + V_\omega \big)  \big) \geq
c_1E^{-\frac{1}{b}}e^{-c_2 \abs{\ln E} E^{-\frac{1}{b}}}.
\eeq
\end{lemma}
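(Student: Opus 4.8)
The plan is to run the standard Lifshitz‑tails lower bound; since we work with the Dirichlet restriction here, only an \emph{upper} bound on the free ground‑state energy is needed, so no analogue of Theorem~\ref{gen:temple} enters. By Corollary~\ref{upper_lower_bound} we already have $N_{f_3}(E)\geq\frac{1}{|\Lambda_L|}\E\big(\Tr\1_{(-\infty,E]}(C(T^{\mathcal D,\mathcal D}_{\tilde g,L})^{\tilde\beta}+V_\omega)\big)$, so it suffices to bound the expected trace from below; and since a spectral projection $\1_{(-\infty,E]}(A)$ has rank at least one whenever $\inf\sigma(A)\leq E$, it is enough to produce an event of controlled probability on which
\[
\inf\sigma\big(C(T^{\mathcal D,\mathcal D}_{\tilde g,L})^{\tilde\beta}+V_\omega\big)\leq E
\]
for $L=L(E)$ of order $E^{-1/b}$.

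First I would control the free ground‑state energy. Since $T^{\mathcal D,\mathcal D}_{\tilde g,L}\geq0$ and $t\mapsto t^{\tilde\beta}$ is increasing, the spectral mapping theorem gives $\inf\sigma\big(C(T^{\mathcal D,\mathcal D}_{\tilde g,L})^{\tilde\beta}\big)=C\big(\inf\sigma(T^{\mathcal D,\mathcal D}_{\tilde g,L})\big)^{\tilde\beta}$. To bound $\inf\sigma(T^{\mathcal D,\mathcal D}_{\tilde g,L})$ from above I would use a trial function: fix a smooth $\chi$ supported in $(-\tfrac12,\tfrac12)$ and set $\phi_L(n)=c_L\,\chi(n/L)$, normalized in $\ell^2(\Lambda_L)$. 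For $L$ large, $\phi_L$ vanishes on the $\tilde\alpha$ boundary sites at each end of $\Lambda_L$ on which the modified Dirichlet blocks $B_{\mathcal D},\wtilde B_{\mathcal D}$ act, so, extending $\phi_L$ by zero to $\ell^2(\Z)$, one has $\langle\phi_L,T^{\mathcal D,\mathcal D}_{\tilde g,L}\phi_L\rangle=\langle\phi_L,T_{\tilde g}\phi_L\rangle=\int_{-\pi}^{\pi}\tilde g(k)\,|(\mathcal F\phi_L)(k)|^2\,\d k$. Using $\tilde g(k)=(2-2\cos k)^{\tilde\alpha}\leq k^{2\tilde\alpha}$ on $[-\pi,\pi]$ and the fact that $\mathcal F\phi_L$ concentrates at momenta of order $1/L$, a routine Fourier rescaling gives $\langle\phi_L,T^{\mathcal D,\mathcal D}_{\tilde g,L}\phi_L\rangle\leq C'L^{-2\tilde\alpha}$, whence, since $2\tilde\alpha\tilde\beta=b$,
\[
\inf\sigma\big(C(T^{\mathcal D,\mathcal D}_{\tilde g,L})^{\tilde\beta}\big)\;\leq\;C(C')^{\tilde\beta}L^{-b}\;=:\;C_0\,L^{-b}.
\]

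Next I would localize the randomness. Take $L=\ceiling{\gamma'E^{-1/b}}$ with $\gamma'>0$ large enough that $C_0L^{-b}\leq E/2$ (and $E$ small enough that $L\geq2\tilde\alpha+1$, so Corollary~\ref{upper_lower_bound} applies), and let $\Omega_L:=\{\omega\in\R^{\Z}:0\leq\omega_n\leq E/2\text{ for all }n\in\Lambda_L\}$. On $\Omega_L$, testing $C(T^{\mathcal D,\mathcal D}_{\tilde g,L})^{\tilde\beta}+V_\omega$ against the normalized ground state $\psi$ of $C(T^{\mathcal D,\mathcal D}_{\tilde g,L})^{\tilde\beta}$, and using that $V_\omega$ is diagonal with $\langle\psi,V_\omega\psi\rangle\leq\max_{n\in\Lambda_L}\omega_n\leq E/2$, gives $\inf\sigma\big(C(T^{\mathcal D,\mathcal D}_{\tilde g,L})^{\tilde\beta}+V_\omega\big)\leq E$, so $\Tr\1_{(-\infty,E]}(\,\cdot\,)\geq1$ on $\Omega_L$ and therefore $\E\big(\Tr\1_{(-\infty,E]}(\,\cdot\,)\big)\geq\PP(\Omega_L)$. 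Since the $\omega_n$ are i.i.d.\ with $\inf\supp P_0=0$ (hence $\omega_n\geq0$ a.s.) and $P_0([0,\eps))\geq C\eps^\kappa$, we get $\PP(\Omega_L)=P_0([0,E/2])^{2L+1}\geq\big(C(E/2)^\kappa\big)^{2L+1}$; taking logarithms and using $2L+1\leq3\gamma'E^{-1/b}$ together with $\ln\big(C(E/2)^\kappa\big)\geq2\kappa\ln E$ for $E$ small, one obtains $\E\big(\Tr\1_{(-\infty,E]}(\,\cdot\,)\big)\geq e^{-c_2'|\ln E|E^{-1/b}}$. Finally, $E^{1/b}=e^{-\frac1b|\ln E|}\geq e^{-|\ln E|E^{-1/b}}$ for $E$ small, so this bound is $\geq E^{-1/b}e^{-(c_2'+1)|\ln E|E^{-1/b}}$, which is the assertion with $c_1=1$ and $c_2=c_2'+1$.

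The only genuinely technical step is the trial‑function estimate in the second paragraph: one must pick the test vector so that it avoids the modified Dirichlet boundary blocks (so the quadratic form collapses to the translation‑invariant $\int\tilde g\,|\mathcal F\phi_L|^2$), and then pass carefully from the discrete Fourier transform on the torus to the rescaled bound $\langle\phi_L,T_{\tilde g}\phi_L\rangle\leq C'L^{-2\tilde\alpha}$ via $2-2\cos k\leq k^2$. This is the discrete analogue of the statement that the Dirichlet ground‑state energy of the fractional Laplacian on an interval of length $L$ is of order $L^{-2\tilde\alpha}$, and is implicit in \cite{G20}; everything else is the textbook Lifshitz lower bound, and the polynomial prefactor $E^{-1/b}$ in the statement is a harmless weakening absorbed into the exponential in the last step.
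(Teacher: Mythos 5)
Your proof is correct and follows the same basic strategy as the paper (reduce to a single small eigenvalue, use a trial function to control the free ground-state energy at scale $L^{-b}$, and localize the random variables on an event of probability $\geq (CE^\kappa)^{|\Lambda_L|}$), but two of the intermediate steps are carried out differently.

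First, to pass from $T^{\mathcal D,\mathcal D}_{\tilde g,L}$ to its $\tilde\beta$-th power you invoke the spectral mapping theorem, $\inf\sigma\bigl(C(T^{\mathcal D,\mathcal D}_{\tilde g,L})^{\tilde\beta}\bigr)=C\bigl(\inf\sigma(T^{\mathcal D,\mathcal D}_{\tilde g,L})\bigr)^{\tilde\beta}$, whereas the paper works at the level of quadratic forms, using Jensen's inequality $\langle\phi,(T^{\mathcal D,\mathcal D}_{\tilde g,L})^{\tilde\beta}\phi\rangle\leq\langle\phi,T^{\mathcal D,\mathcal D}_{\tilde g,L}\phi\rangle^{\tilde\beta}$. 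Both are correct; yours is arguably cleaner here since only the bottom of the spectrum matters.

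Second — and this is the step worth flagging — for the bound $\langle\phi_L,T^{\mathcal D,\mathcal D}_{\tilde g,L}\phi_L\rangle\lesssim L^{-2\tilde\alpha}$ you propose a Fourier-rescaling argument on the torus (replacing the symbol by $k^{2\tilde\alpha}$ and using that $\mathcal F\phi_L$ concentrates at scale $1/L$). The paper instead proves this entirely in real space (its Lemma~\ref{lem:power}), writing $-\Delta=T^*T$ with the forward difference $T$, iterating the fundamental theorem of calculus to bound $|(T^N\psi_L)(m)|\leq\|\psi_1^{(N)}\|_\infty L^{-N}$, and summing. Your Fourier sketch gives the right answer, but as written it glosses over the passage from the discrete transform to the continuous one (Poisson summation / aliasing errors on the torus) — the finite-difference route in the paper avoids this entirely and is the more robust way to make the heuristic rigorous. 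The observation that $\phi_L$ must be supported away from the boundary so that the modified Dirichlet blocks $B_{\mathcal D},\wtilde B_{\mathcal D}$ do not contribute is exactly right and matches the paper's reasoning.

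Finally, you spell out explicitly the large-deviation step, $\PP(\Omega_L)\geq(C(E/2)^\kappa)^{2L+1}$, and then absorb the polynomial prefactor $E^{-1/b}$ into the exponential by noting $E^{1/b}\geq e^{-|\ln E|E^{-1/b}}$ for small $E$; the paper instead refers to \cite[Sec.\ 4.4.2]{AW15} for this computation. Both yield the stated bound. In short, no gap — just a Fourier-space variant of Lemma~\ref{lem:power} that should be tightened if you want it fully self-contained.
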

\begin{proof}
We estimate from below
\begin{align}
\E\big( \Tr \1_{(-\infty,E]}\big(C(T^{\mathcal D,\mathcal D}_{\tilde g,L})^{\tilde\beta} + V_\omega \big)  \big)
&\geq
\PP \big( E_0 \big(C(T^{\mathcal D,\mathcal D}_{\tilde g,L})^{\tilde\beta} + V_\omega \big) < E \big)\notag\\
&\geq
\PP \big( \big\<\psi, \big( C(T^{\mathcal D,\mathcal D}_{\tilde g,L})^{\tilde\beta} + V_\omega \big) \psi\big\> <  E \big)\notag \\
&\geq \PP \big( C\big\<\psi,  (T^{\mathcal D,\mathcal D}_{\tilde g,L})^{\tilde\beta}\psi\big\> + \max_{n\in\Lambda_L}V_\omega(n)  < E \big)
\end{align}
for any normalized $\psi\in\ell^2(\Lambda_L)$. Let $\psi_L$ be defined in Lemma \ref{lem:power} and $\phi_L:= \psi_L/\|\psi_L\|_2$. 
 With this choice of $\phi_L$  we further estimate using Jensen's inequality
\beq
\<\phi_L, (T^{\mathcal D,\mathcal D}_{\tilde g,L})^{\tilde\beta}  \phi_L\big\>
\leq
\<\phi_L, T^{\mathcal D,\mathcal D}_{\tilde g,L}  \phi_L\big\>^{\tilde\beta}.
\eeq
From the definition of $\psi_L$ it follows directly that
\beq
\|\psi_L\|_2^2 \geq C_1 L
\eeq
for some constant $C_1>0$ and Lemma \ref{lem:power} below gives for some other constant $C_2>0$
\beq
\<\psi_L, T^{\mathcal D,\mathcal D}_{\tilde g,L}  \psi_L\big\> \leq C_2 \frac 1 {L^{2 \tilde \alpha -1 }}.
\eeq
The last two inequalities impy the bound
\beq
C\<\phi_L, T^{\mathcal D,\mathcal D}_{\tilde g,L}  \phi_L\big\>^{\tilde\beta} \leq C_3L^{-2\tilde \alpha \tilde \beta} = C_3L^{-b},
\eeq
for some positive constant $C_3>0$. Altogether we obtain
\beq
E_0 \big(C(T^{\mathcal D,\mathcal D}_{\tilde g,L})^{\tilde\beta} + V_\omega \big)< \frac{C_3}{L^b}+\max_{n\in\Lambda_L}V_\omega(n) .  
\eeq
This last estimate replaces \cite[Eq.(4.51)]{AW15} and the rest follows along the lines of \cite[Sec. 4.4.2]{AW15}.
\end{proof}

\begin{lemma}\label{lem:power}
Let $N\in\mathbb N$ and  $\psi_1\in C^\infty([-1,1])$ satisfying $\psi_1(x) = 1$ for all $x\in (-\frac 1 2,\frac 1 2)$ and $\psi_1(x) = 0$ for $|x|\geq \frac 3 4$. We set $\psi_L\in C^\infty([-L,L])$, $\psi_L(x) := \psi_1(\frac x L)$, $x\in \Lambda_L$, and note that $\psi_L$ is supported on $\Lambda_{3L/4}$. There exists a constant $C_N$ depending on $N$ such that 
\beq
\<\psi_L, T^{\mathcal D,\mathcal D}_{ g,L}  \psi_L\big\> \leq C_{N} \frac 1 {L^{2N-1}}
\eeq
where $g(x) = (2 - 2 \cos(x))^N$, $x\in \mathbb T$. 
\end{lemma}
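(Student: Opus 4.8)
The plan is to estimate the quadratic form $\langle \psi_L, T^{\mathcal{D},\mathcal{D}}_{g,L}\psi_L\rangle$ directly by exploiting that $T^{\mathcal{D},\mathcal{D}}_{g,L}$ is a finite-volume restriction of $T_g$ (plus a fixed Dirichlet boundary matrix supported near the endpoints $\pm L$), that $g(x)=(2-2\cos x)^N$ is the symbol of $(-\Delta)^N$, and that $\psi_L$ is supported well inside $\Lambda_L$, on $\Lambda_{3L/4}$, so that the boundary terms do not see it. Concretely, I would first observe that since $\psi_L$ vanishes on a neighbourhood of the endpoints where the $N\times N$ Dirichlet blocks $B_{\mathcal D},\widetilde B_{\mathcal D}$ act (for $L$ large, $3L/4 < L - N$), one has $\langle \psi_L, T^{\mathcal{D},\mathcal{D}}_{g,L}\psi_L\rangle = \langle \psi_L, T_{g,L}\,\psi_L\rangle = \langle \psi_L, T_g\,\psi_L\rangle$, where in the last step I extend $\psi_L$ by zero to all of $\ell^2(\Z)$; this uses that $T_g$ is banded with band width $2N+1$ and $\psi_L$ is supported away from the boundary, so no mass leaks out. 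Thus it suffices to bound $\langle \psi_L, (-\Delta)^N \psi_L\rangle_{\ell^2(\Z)}$.

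Next I would pass to the Fourier side: $\langle \psi_L, (-\Delta)^N\psi_L\rangle = \frac{1}{2\pi}\int_{-\pi}^{\pi} (2-2\cos k)^N \,|\widehat{\psi_L}(k)|^2\,\dx k$ where $\widehat{\psi_L}(k)=\sum_{n}\psi_L(n)e^{-ikn}$. Using $2-2\cos k = 4\sin^2(k/2) \le k^2$ for $k\in[-\pi,\pi]$, this is at most $\frac{1}{2\pi}\int_{-\pi}^{\pi} k^{2N}|\widehat{\psi_L}(k)|^2\,\dx k \le \|(-\Delta)^N\psi_L\|$-type control; but more cleanly, since differentiating in Fourier corresponds to applying the discrete difference operator, $\langle\psi_L,(-\Delta)^N\psi_L\rangle$ equals $\|\Delta^{N/2}\psi_L\|^2$ if $N$ is even, and in general equals $\sum_n |(\nabla^{N}\psi_L)(n)|^2$ up to constants, where $\nabla$ is a discrete first difference. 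Either route works; I will take the difference-operator route. Because $\psi_L(n)=\psi_1(n/L)$ with $\psi_1\in C^\infty$, each discrete difference gains a factor $L^{-1}$ by the mean value theorem applied to the smooth $\psi_1$: $|(\nabla^j \psi_L)(n)| \le C_j L^{-j}\sup|\psi_1^{(j)}|$ for all $n$ and all $j\le N$, with the bound uniform in $n$. Hence $|(\nabla^{N}\psi_L)(n)|\le C_N' L^{-N}$, and these differences are supported on $O(L)$ sites (a neighbourhood of $\supp\psi_L$). Summing, $\langle\psi_L,(-\Delta)^N\psi_L\rangle \le \sum_{n\in O(L)\text{ sites}} |(\nabla^N\psi_L)(n)|^2 \lesssim L \cdot (C_N' L^{-N})^2 = C_N L^{1-2N}$, which is exactly the claimed bound $C_N L^{-(2N-1)}$.

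The main obstacle—really the only non-routine point—is making the reduction in the first step rigorous: one must check carefully that the Dirichlet boundary blocks $B_{\mathcal D}$, $\widetilde B_{\mathcal D}$ from \cite{G20} are indeed supported on the $N$ outermost sites of $\Lambda_L$ at each end, and that $\psi_L$, being supported on $\Lambda_{3L/4}$, has empty intersection with those sites once $L > 4N$, so that both the boundary-block contribution and the truncation error $\langle\psi_L,(T_g - T_{g,L})\psi_L\rangle$ vanish identically (not just are small). Given the banded structure of $T_g$ with half-bandwidth $N$ and $\supp\psi_L \subset \Lambda_{3L/4}$, for $L$ large $(T_g\psi_L)(n)$ is supported on $\Lambda_{3L/4+N}\subset \Lambda_L$ and never involves matrix entries touching the boundary blocks, so the identity is exact. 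After that, the Fourier/difference-operator estimate is a standard ``smooth cutoff on scale $L$'' computation, and the constant $C_N$ depends only on $N$ and on fixed derivative bounds of $\psi_1$, as required.
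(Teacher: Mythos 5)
Your proposal is correct and follows essentially the same route as the paper: reduce $\langle\psi_L, T^{\mathcal D,\mathcal D}_{g,L}\psi_L\rangle$ to $\langle\psi_L,(-\Delta)^N\psi_L\rangle$ using that $\psi_L$ is supported away from the Dirichlet boundary blocks, factor $(-\Delta)^N$ through the discrete first-difference operator, and bound the $N$-fold difference of $\psi_L$ by $L^{-N}\|\psi_1^{(N)}\|_\infty$ before summing over $O(L)$ sites. The paper realizes the difference bound via an explicit iterated-integral representation of $(T^N\psi_L)(m)$ proved by induction and the fundamental theorem of calculus, whereas you invoke the mean value theorem for repeated differences, but these are the same estimate.
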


\begin{proof}
We note that $2 - 2 \cos(\cdot)$ is the symbol of the negative discrete Laplacian $-\Delta: \ell^2(\Z)\to\ell^2(\Z)$, $(-\Delta\psi)(n) = -\psi(n-1) + 2\psi(n) -\psi(n+1)$, $n\in\Z$. Since $\psi_L(x) = 0$ for all $|x|\geq \frac 3 {4L}$ and the boundary condition of $T^{\mathcal D,\mathcal D}_{ g,L}$ is supported in a neighborhood of the boundary of size $N$ only, we obtain for $L$ big enough
\beq\label{eq:T}
\<\psi_L, T^{\mathcal D,\mathcal D}_{ g,L}  \psi_L\big\> = \<\psi_L, (-\Delta)^N\psi_L\big\>.
\eeq
It is straight forward to see that $(-\Delta) = T^* T$ with $(T\psi)(n) = \psi(n) - \psi(n+1)$, $n\in\Z$ and therefore
\beq
\<\psi_L, (-\Delta)^N\psi_L\big\> = \<T^N\psi_L, T^N\psi_L\big\>.
\eeq
Assume we have for all $m\in\Lambda_L$ the representation 
\beq\label{eq:formula_diff}
\big|(T^N\psi_L)(m)\big| = \Big| \int_{m/L}^{(m+1)/L} \d t_1 \int_{t_1}^{t_1 + 1/L} \d t_2\, \cdots \int_{t_{N-1}}^{t_{N-1} + 1/L} \d t_N\, \psi^{(N)}_1(t_N)\Big|,
\eeq
where $\psi_1^{(N)}$ stands for the $N$th derivative of $\psi_1$. Using the latter formula, we readily obtain that
\beq
\<T^N\psi_L, T^N\psi_L\big\> \leq \frac{\|\psi_1^{(N)}\|^2_\infty}{L^{2N}} \sum_{m\in\Lambda_L} 1 <  \frac{3\|\psi_1^{(N)}\|^2_\infty}{L^{2N-1}}
\eeq
which gives the assertion together with \eqref{eq:T}. 

It remains to show formula \eqref{eq:formula_diff}. This follows from an induction with respect to $N$ and the fundamental theorem of calculus.
\end{proof}

\section{Proof of Theorem \ref{thm:main}}\label{s:proof}
\begin{proof}We combine the results from the two previous sections. The upper bound is given by combining Corollary \ref{corupplow}, Corollary \ref{upper_lower_bound}, the discussion preceding Equation \eqref{eq:ub} and Lemma \ref{lem:upper-bound}.

Under the supplementary assumption on the single-site probability distribution, the lower bound is obtained by combining Corollary \ref{corupplow}, Corollary \ref{upper_lower_bound}, Lemma \ref{lem:equiv} and the discussion preceding it, and Lemma \ref{lem:lowerbd}.
\end{proof}

\appendix
\section{}
\subsection{Off-diagonal decay of Laurent matrices}
\begin{lemma}\label{lem:offdiag-decay}
Let $T_f$ be Laurent operator associated to a symbol $f$ satisfying Assumptions (A1)\,--\,(A2). The  matrix entries of $T_f$ decay as
\[\abs{a_{n-m}}\leq \frac{1}{\abs{n - m}^{1+\nu}}, \quad m, n \in\mathbb Z\]
therefore, $T_f$ is an, at most, long-range operator with polynomially decaying off-diagonal terms.
\end{lemma}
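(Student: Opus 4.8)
\emph{Proof idea.} Since a $\nu$-H\"older function on $\mathbb T$ with $\nu>1$ is constant, assume $\nu\in(0,1]$; the bound is vacuous for $n=m$, so fix $\ell:=n-m\neq0$. The plan is to combine two standard devices. The first is the \emph{shift trick}: replacing $k$ by $k+\pi/\ell$ in the integral defining $a_\ell$ and using periodicity gives $2a_\ell=\frac1{2\pi}\int_{-\pi}^{\pi}\bigl(f(k)-f(k+\pi/\ell)\bigr)e^{-ik\ell}\,\d k$, whence $\abs{a_\ell}\le\frac12\,[f]_{C^\nu}\,\pi^\nu\,\abs{\ell}^{-\nu}$ by the H\"older hypothesis in (A2). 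This already gives polynomial off-diagonal decay, but with exponent $\nu$ only, which fails to be summable when $\nu<1$. The whole point of the lemma is to gain the missing factor $\abs{\ell}^{-1}$, and that is where the piecewise-$C^1$ structure enters.

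The plan for the gain is to integrate by parts once. Writing $\mathbb T\setminus\{x_1,\dots,x_r\}$ --- the $x_j$ being the finitely many non-differentiability points permitted by (A2) --- as a union of open arcs on which $f$ is $C^1$, and using that $f$ is continuous, $2\pi$-periodic, and has integrable derivative, an integration by parts on each arc together with a telescoping of the boundary terms (which cancel in pairs precisely because $f$ is continuous and periodic) yields, for $\ell\neq0$,
\beq
a_\ell=\frac1{i\ell}\,b_\ell,\qquad b_\ell:=\frac1{2\pi}\int_{-\pi}^{\pi}f'(k)\,e^{-ik\ell}\,\d k ,
\eeq
so it remains to show $\abs{b_\ell}\le C\,\abs{\ell}^{-\nu}$ for some constant $C$.

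For that I would invoke the shift trick a second time, now for $f'$: it reduces the claim to the $L^1$-modulus-of-continuity bound $\|f'(\cdot+h)-f'\|_{L^1(\mathbb T)}\le C\,h^{\nu}$. To verify this, split $\mathbb T$ into small arcs around the $x_j$ and their complement. On the complement $f$ is genuinely $C^1$ with a well-behaved (bounded-variation) derivative, so that part of the difference is $O(h)=O(h^\nu)$. Near each $x_j$ one uses the H\"older control of $f$ together with the algebraic behaviour of $f$ at its minima from (A4) (and the corresponding one-sided piecewise profile at any other break point) to see that $f'$ has an integrable power-type singularity $\sim\abs{k-x_j}^{\gamma-1}$ with $\gamma\ge\nu$, for which the elementary estimate $\int_0^\delta\bigl|(s+h)^{\gamma-1}-s^{\gamma-1}\bigr|\,\d s\le C\,h^{\min(\gamma,1)}\le C\,h^{\nu}$ applies. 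Summing the finitely many pieces gives $\|f'(\cdot+h)-f'\|_{L^1}\le Ch^\nu$, hence $\abs{b_\ell}\le C\abs{\ell}^{-\nu}$ and $\abs{a_\ell}\le C\abs{\ell}^{-1-\nu}$. The constant-free form then follows by passing to a slightly smaller exponent $\tilde\nu\in(0,\nu)$, which absorbs $C$ for all large $\abs{\ell}$; the finitely many remaining $\ell$ do not matter, since the only use of the estimate downstream is the summability $\sum_\ell\abs{a_\ell}<\infty$, i.e.\ that $T_f$ is long-range. (If one only wants summability, a soft alternative is to note $f'\in L^p(\mathbb T)$ for some $p>1$ and combine Hausdorff--Young with H\"older's inequality.)

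The step I expect to be the crux is the control of $f'$ near the finitely many break points, where it need not be bounded. This is exactly what blocks the one-line ``integrate by parts, then $f'\in C^\nu$'' argument, and it is why the local estimate above has to lean on the algebraic structure of $f$ near its minima rather than on (A2) alone. Making that local estimate precise --- in particular verifying $f'\in L^1(\mathbb T)$ so that the integration by parts is legitimate, and the uniform power-type bound on $f'$ near each $x_j$ --- is the only genuinely technical point; the remainder is classical Fourier analysis.
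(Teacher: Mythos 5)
Your route — the shift trick for the basic $|\ell|^{-\nu}$ decay, then integration by parts plus an $L^1$-shift on $f'$ to gain the extra factor $|\ell|^{-1}$ — is essentially the mechanism the paper relies on, with a slightly different packaging: the paper first decomposes $f = \sum_j h_j$ with each $h_j$ carrying a single non-$C^1$ point, and then cites (an adaptation of) \cite[Thm.~2.2]{GRM20} for the single-singularity estimate, whereas you treat all break points simultaneously through the $L^1$-modulus of continuity of $f'$. Your version is a legitimate and more self-contained variant of the same idea, and your remark that one should weaken the exponent slightly (or insert a constant) to make the stated bound literally correct is also appropriate.

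You have put your finger on the genuine crux: justifying $\|f'(\cdot+h)-f'\|_{L^1} \lesssim h^\nu$ near the break points, and you are candid that this leans on the algebraic structure of $f$. Be aware, though, that (A4) prescribes algebraic behaviour only at the \emph{minima} $E_1,\dots,E_M$, while (A2) allows additional non-$C^1$ points $x_j$ that need not be minima; at such points neither (A2) nor (A4) yields a power-type bound on $f'$. This is not merely a cosmetic gap: if $f'(k)\sim |k-x_j|^{\nu-1}\cos\big(1/(k-x_j)\big)$ near a break point (which is compatible with $f$ being $\nu$-H\"older and piecewise $C^1$ there, since $h(k)=\int_0^k s^{\nu-1}\cos(1/s)\,\d s$ is $\nu$-H\"older), a stationary-phase computation gives $\widehat{f'}(\ell)\sim |\ell|^{-1/4-\nu/2}$ and hence $|a_\ell|\sim |\ell|^{-5/4-\nu/2}$, which is \emph{slower} than $|\ell|^{-1-\nu}$ once $\nu>1/2$. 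Read literally under (A1)–(A2) alone, the lemma is therefore not quite correct; the paper's proof carries the same implicit assumption, since the cited result from \cite{GRM20} is formulated for symbols with an algebraic singularity profile. For the symbols the paper actually targets (products $\prod_i(2-2\cos(k-E_i))^{\alpha_i}$, whose break points coincide with the $E_i$), your argument goes through exactly as you outline, and you were right to single this step out as the only genuinely technical point.
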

\begin{proof}
To see this, note that $f\in C^{1}(\mathbb T)$ except at finitely many points $x_j\in \mathbb T$, $j=1,..,m$ for some $m\in \mathbb N$ where $f$ remains $\nu$-H\"older continuous for some $\nu>0$ at these points.
An adaption of the proof of \cite[Thm. 2.2]{GRM20} implies for symbols $h\in C^1(\mathbb T)\setminus\{y\}\cap C^{\alpha}(\mathbb T)$ that
\beq
\label{appendix:off_diag_decay} |T_h(n,m)|\leq \frac C {|n-m|^{1+\nu}}, \quad n,m\in\Z
\eeq
for some $C>0$.
We write $g = \sum_{j=1}^m h_j$ with $h_j \in C^1(\mathbb T)\setminus\{x_j\} \cap C^\alpha(\mathbb T)$. Now we apply inequality \eqref{appendix:off_diag_decay} to each function $h_j$ and since there is only a finite number of those we end up with the result.
\end{proof}

\begin{proposition}\label{prop:IDS-Tf}
Let $T_f$ be the Laurent operator associated to a symbol $f$ satisfying Assumptions (A1)\,--\,(A2). The limit
\beq
\wtilde I_f(E):=\lim_{L\to\infty} \frac{\Tr\big( 1_{[-L,L]} 1_{\leq E} (T_{f}) \big)}{2L+1}
\eeq
exists and equals the IDS of $T_f$, $I_f(E)$, for all $E\in\mathbb R$, where $I_f$ is defined in \eqref{eq:IDS-Tf}. Moreover,
\beq\label{eq:IDS-Tf-form}
I_f(E)= \frac 1 {2\pi}\big| \big\{ k\in [-\pi,\pi]: \ f(k) \leq E\big\}\big|,
\eeq
\end{proposition}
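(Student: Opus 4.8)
The plan is to establish Proposition \ref{prop:IDS-Tf} in two parts: first, the existence of the limit $\wtilde I_f(E)$ together with the identification $\wtilde I_f(E) = I_f(E)$, and second, the explicit formula \eqref{eq:IDS-Tf-form}. For the second part I would use the unitary equivalence of $T_f$ with multiplication by $f$ on $L^2(\mathbb T)$ via the discrete Fourier transform $\mathcal F$. Since $\mathcal F T_f \mathcal F^* = M_f$, the spectral projection satisfies $\mathcal F\, 1_{\leq E}(T_f)\, \mathcal F^* = 1_{\{f \leq E\}}$ as a multiplication operator. The quantity $\frac{1}{2L+1}\Tr(1_{[-L,L]} 1_{\leq E}(T_f))$ is then $\frac{1}{2L+1}\sum_{n=-L}^{L} \langle \delta_n, 1_{\leq E}(T_f)\delta_n\rangle$, and each diagonal entry $\langle \delta_n, 1_{\leq E}(T_f)\delta_n\rangle$ equals $\frac{1}{2\pi}\int_{-\pi}^\pi 1_{\{f(k)\leq E\}}\,\d k = \frac{1}{2\pi}|\{k : f(k) \leq E\}|$, independently of $n$, by translation invariance (the kernel of $1_{\leq E}(T_f)$ is a convolution kernel). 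Hence the average is constant in $L$, the limit trivially exists, and it equals the right-hand side of \eqref{eq:IDS-Tf-form}. This simultaneously gives existence of $\wtilde I_f$ and its formula.

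It then remains to show $\wtilde I_f(E) = I_f(E)$, i.e.\ that the Fourier-side computation agrees with the eigenvalue-counting definition \eqref{eq:IDS-Tf} using the finite-volume restrictions $T_{f,L} = 1_{\Lambda_L} T_f 1_{\Lambda_L}$ with simple boundary conditions. The standard route is to compare $\Tr(1_{\Lambda_L} 1_{\leq E}(T_f))$ with $\Tr(1_{\leq E}(T_{f,L})) = \#\{j : \lambda_j \leq E\}$ and show the difference is $o(L)$. Here is where the off-diagonal decay \eqref{eq:offdiag-decay1} from Lemma \ref{lem:offdiag-decay} enters: because $|a_n| \leq |n|^{-1-\nu}$ is summable, $T_f$ is in a sense well-approximated by its finite-volume truncation away from the boundary, and the boundary region contributes only $O(L^0)$ or at worst $O(L^{\text{(something)}} \cdot o(1))$ per unit — a geometric argument showing that bulk sites far from $\partial\Lambda_L$ "don't see" the boundary condition. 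Concretely one would either (i) follow the argument of \cite[Proposition 2.1]{GRM20} verbatim — indeed the excerpt already invokes exactly this in the proof of Lemma \ref{lem:IDS} — or (ii) use a resolvent/Helffer–Sjöstrand comparison, writing $1_{\leq E}$ (or a smoothed version and then a limiting argument in $E$ to handle points of discontinuity) and estimating $\|1_{\Lambda_L}(R_z(T_f) - R_z(T_{f,L}))1_{\Lambda_L}\|_1 = o(L)$ using summability of the matrix entries.

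The main obstacle, and the only nonroutine point, is precisely controlling the boundary error $|\Tr(1_{\Lambda_L} 1_{\leq E}(T_f)) - \#\{j:\lambda_j \leq E\}| = o(L)$ in the long-range setting, since unlike the nearest-neighbor Laplacian the operator $T_f$ couples all sites and the truncation $1_{\Lambda_L} T_f 1_{\Lambda_L}$ genuinely differs from $T_f$ at every site, not just near $\partial\Lambda_L$. The summability $\sum_n |a_n| < \infty$ guaranteed by (A2) is exactly what saves this: the total "mass" of couplings crossing out of a window at distance $d$ from the boundary is $\sum_{|n|>d}|a_n| \to 0$, so the trace norm of the difference of spectral projections localized to the bulk is controlled. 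Since this argument is identical in structure to \cite[Proposition 2.1]{GRM20} (cited in the excerpt for Lemma \ref{lem:IDS}, where the random potential is present but plays no role in this particular estimate), I would simply invoke it: apply \eqref{eq:offdiag-decay1} and repeat the reasoning of \cite[Proposition 2.1]{GRM20} with $V_\omega = 0$. The formula \eqref{eq:IDS-Tf-form} then follows from the Fourier computation above, and the existence of $\wtilde I_f$ is automatic from that computation while the existence of $I_f$ (the eigenvalue-counting limit) follows from its equality with $\wtilde I_f$.
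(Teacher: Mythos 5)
Your proposal matches the paper's proof in both structure and detail: translation invariance makes each diagonal entry $\langle\delta_n,1_{\leq E}(T_f)\delta_n\rangle$ constant and equal to $\frac{1}{2\pi}|\{k: f(k)\leq E\}|$ via the unitary equivalence with $M_f$, and the identification $\wtilde I_f = I_f$ is handled, as in the paper, by invoking the off-diagonal decay from Lemma~\ref{lem:offdiag-decay} together with the argument of \cite[Proposition 2.1]{GRM20}. The only difference is cosmetic ordering of the two parts.
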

\begin{proof}
Using the off-diagonal decay of $T_f$ shown above we can follow the arguments in \cite[Proposition 2.1]{GRM20} to show that $\wtilde I_f$ and $I_f$ are identical for all $E\in\mathbb R$.

Next, let $\tau_x:\ell^2(\mathbb Z)\rightarrow \ell^2(\mathbb Z)$ be the translation by $x\in\mathbb Z$ acting on  $\ell^2(\mathbb Z)$ by $\tau_x \varphi(n)=\varphi(n-x)$, for $\varphi\in\ell^2(\mathbb Z)$.
The translation invariance of $T_f$ implies that
\beq \lim_{L\to\infty} \frac{\Tr\big( 1_{[-L,L]} 1_{\leq E} (T_{f}) \big)}{2L+1}=\frac{1}{2L+1}\sum_{x\in [-L,L]}\langle{\delta_x,1_{\leq E} (T_{f})\delta_x}\rangle= \langle \delta_0, 1_{\leq E} (T_{f})\delta_0 \rangle
\eeq
Using Fourier transform in the r.h.s. of the last line and the fact that $T_f$ is unitary equivalent to the operator $M_f$, multiplication by $f$, yields \eqref{eq:IDS-Tf-form}.
\end{proof}

\subsection{Properties of the ground state space $\mathcal G$ of Neumann restrictions of Toeplitz matrices}
Let $T$ be a Laurent matrix with band width $N$ associated to a symbol $g$ of the form \eqref{eq:def-g}. Consider its restriction to the cube $\L_L=[-L,L]\cap \mathbb Z$, with $L\in\mathbb N$, $L>2N+1$ with modified Neumann boundary conditions (see Section 3), denoted by $T_L^{\mathcal N}$. We define its ground state space by
\beq
\mathcal G:=\big\{\varphi\in \ell^2(\Lambda_L): T_L^{\mathcal N}\varphi = 0\big\}
\eeq
We recall from \cite[Section 5]{G20} (see section 3) that the ground-state space $\mathcal G$ is spanned by
\beq
\mathcal G = \text{span} \big\{\varphi^j_{k,L}\in \ell^2(\Lambda_L):\ k=1,...,M,\,\,j=0,...,\overline \alpha-1  \big\}
\eeq
with
\begin{align}\label{def:varphi_k}
\varphi^j_{k,L} := \frac 1 {K_{j,L}^{1/2}}\big( (-L)^j e^{i L  E_k },...,0^j,1^j e^{-i E_k},..., L^je^{-i L E_k}\big)^T\in\R^{2L+1}
\end{align}
 for $k=1,...,M$ and $j=0,...,\overline \alpha-1$, and
 \beq
 K_{j,L} = 2\sum_{m=-L}^L |m|^{2j}.
 \eeq
\begin{lemma}
Let $\varphi\in\mathcal G$. Then there exists a constant $C>0$ such that for all $L\in\N$, $r\neq k$ and $j,s\in \{0,...,\overline \alpha\}$
\beq
|\<\varphi_{k,L}^j,\varphi_{r,L}^s\>| \leq \frac C {|\Lambda_L|}.
\eeq
\end{lemma}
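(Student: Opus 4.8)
Despite the phrasing, the displayed inequality involves only the spanning vectors $\varphi_{k,L}^{j}$, $\varphi_{r,L}^{s}$ — the hypothesis $\varphi\in\mathcal G$ plays no role — so the claim is really a near-orthogonality statement for the basis \eqref{def:varphi_k}. First I would make the entries explicit: for $m\in\L_L$ the $m$-th coordinate of $\varphi_{k,L}^{j}$ is $K_{j,L}^{-1/2}\,m^{j}e^{-imE_{k}}$, whence
\[
\<\varphi_{k,L}^{j},\varphi_{r,L}^{s}\>
=\frac{1}{\sqrt{K_{j,L}\,K_{s,L}}}\sum_{m=-L}^{L} m^{j+s}\,e^{im(E_{k}-E_{r})}.
\]
Put $\theta:=E_{k}-E_{r}$ and $p:=j+s\le 2\overline\alpha$. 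Since the points $E_{1},\dots,E_{M}\in\mathbb T$ are distinct, $\theta\not\equiv 0\pmod{2\pi}$, hence $|\sin(\theta/2)|>0$. The whole matter thus reduces to controlling the oscillatory sum $\sum_{m=-L}^{L}m^{p}e^{im\theta}$ relative to the normalizations $K_{j,L}$, $K_{s,L}$.

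The key step is the bound
\[
\Bigl|\sum_{m=-L}^{L}m^{p}e^{im\theta}\Bigr|\le\frac{C_{p}}{|\sin(\theta/2)|}\,L^{p},
\]
valid for $\theta$ bounded away from $0\pmod{2\pi}$, with $C_{p}$ depending only on $p$. I would prove it by summation by parts: with $S_{n}:=\sum_{m=-L}^{n}e^{im\theta}$ one has the Dirichlet-kernel bound $|S_{n}|\le|\sin(\theta/2)|^{-1}$ uniformly in $-L\le n\le L$, while the mean value theorem gives $|(m+1)^{p}-m^{p}|\le p\,(|m|+1)^{p-1}$; Abel summation then yields
\[
\Bigl|\sum_{m=-L}^{L}m^{p}e^{im\theta}\Bigr|
=\Bigl|S_{L}L^{p}-\sum_{m=-L}^{L-1}S_{m}\bigl((m+1)^{p}-m^{p}\bigr)\Bigr|
\le\frac{1}{|\sin(\theta/2)|}\Bigl(L^{p}+2L\cdot p\,(L+1)^{p-1}\Bigr),
\]
which is $O(L^{p})$. (Alternatively, $\sum_{m}m^{p}e^{im\theta}=(-i\partial_{\theta})^{p}D_{L}(\theta)$ with $D_{L}(\theta)=\sin((L+\tfrac12)\theta)/\sin(\theta/2)$ the Dirichlet kernel; differentiating $p$ times, each derivative hitting the numerator costs a factor $\le L+\tfrac12$ and the derivatives of $1/\sin(\theta/2)$ stay bounded on the relevant range, giving the same $O(L^{p})$ with a constant $\propto|\sin(\theta/2)|^{-1}$.) The point — and the only genuine obstacle — is this cancellation: the naive triangle inequality gives only $\sum|m|^{p}=O(L^{p+1})$, losing exactly the one power of $L$ that the statement demands.

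Finally I would combine this with the elementary estimate $K_{j,L}\ge c_{j}L^{2j+1}$ for some $c_{j}>0$ (indeed $K_{j,L}\asymp L^{2j+1}$), so that $(K_{j,L}K_{s,L})^{-1/2}\le C'L^{-(j+s+1)}=C'L^{-(p+1)}$ and hence
\[
\bigl|\<\varphi_{k,L}^{j},\varphi_{r,L}^{s}\>\bigr|\le\frac{C'C_{p}}{|\sin(\theta/2)|}\,\frac1L\le\frac{C}{|\L_{L}|},
\]
after absorbing $|\L_{L}|=2L+1\asymp L$. Uniformity over the finitely many tuples $(k,r,j,s)$ with $k\ne r$ is immediate: $p$ ranges over $\{0,\dots,2\overline\alpha\}$ and $\min_{k\ne r}|\sin((E_{k}-E_{r})/2)|>0$, so one takes the maximum of the finitely many resulting constants, enlarging it once more to cover the finitely many small $L$ for which the asymptotic estimate is not yet in force. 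A minor point worth flagging is that for odd $p$ the summand $m^{p}$ changes sign across $m=0$, but this leaves the mean value bound on $|(m+1)^{p}-m^{p}|$ untouched, so the summation-by-parts argument goes through verbatim.
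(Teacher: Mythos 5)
Your proposal is correct and follows essentially the same route as the paper: both write the inner product as the normalized oscillatory sum $(K_{j,L}K_{s,L})^{-1/2}\sum_{m=-L}^L m^{j+s}e^{im(E_k-E_r)}$, use summation by parts together with boundedness of the geometric sum $\sum_m e^{im(E_k-E_r)}$ to obtain $O(L^{j+s})$, divide by $(K_{j,L}K_{s,L})^{1/2}\asymp L^{j+s+1}$, and maximize over the finitely many tuples $(k,j,r,s)$ with $k\neq r$. The only cosmetic difference is that the paper speaks of applying summation by parts $j+s$ times, whereas you apply Abel summation once combined with a mean-value estimate on the finite difference $(m+1)^p-m^p$ — both yield the same $O(L^{j+s})$ bound.
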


\begin{proof}
We compute
\beq
|\<\varphi_{k,L}^j,\varphi_{r,L}^s\>| = \frac 1 {(K_{j,L}K_{s,L})^{1/2}}\Big|\sum_{m=-L}^L m^{j+s} e^{i m (E_k - E_r)}\Big|
\eeq
Using summation by parts $j+s$ times and $|\sum_{m=-L}^L e^{i m (E_k - E_r)}| = O(1)$ we obtain that
\beq
\Big|\sum_{m=-L}^L m^{j+s} e^{i m (E_k - E_r)}\Big| = O(L^{j+s}).
\eeq
Now $K_{j,L}K_{s,L} = O(L^{2j+2s+2})$ implies
\beq
|\<\varphi_{k,L}^j,\varphi_{r,L}^s\>| = O(L^{-1})
\eeq
and therefore there exists a constant $C_{k,j,r,s}$ depending on $k,j,r,s$ but independent of $L$ such that
\beq
|\<\varphi_{k,L}^j,\varphi_{r,L}^s\>| \leq \frac{C_{k,j,r,s}}{L}.
\eeq
Taking $C:=\displaystyle \max_{j,s,r\neq k} C_{k,j,r,s}$, gives the assertion.
\end{proof}

\begin{lemma}\label{lem:upp-bound-phi}
There exists $L_0$ such that for all $L\geq L_0$, $l\in \Lambda_L$ and $\varphi\in\mathcal G$
\beq
|\varphi(l)| \leq \frac{\sqrt{2N}}{\sqrt{|\Lambda_L|}}.
\eeq
\end{lemma}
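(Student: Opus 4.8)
The plan is to control $|\varphi(l)|$ by the diagonal entry of the orthogonal projection $P_{\mathcal G}$ of $\ell^2(\Lambda_L)$ onto $\mathcal G$, and then to estimate that diagonal using the explicit frame $\{\varphi^j_{k,L}\}$ from \eqref{def:varphi_k} together with the overlap bound proved just above. Since every unit vector $\varphi\in\mathcal G$ satisfies $\varphi=P_{\mathcal G}\varphi$, for each $l\in\Lambda_L$ one has
\[
|\varphi(l)|=|\langle\delta_l,\varphi\rangle|=|\langle P_{\mathcal G}\delta_l,\varphi\rangle|\le\|P_{\mathcal G}\delta_l\|_2 ,
\]
so it is enough to show $\|P_{\mathcal G}\delta_l\|_2^2=\langle\delta_l,P_{\mathcal G}\delta_l\rangle\le 2N/|\Lambda_L|$ for all $l$ and all large $L$ (this is in fact an equality with $\sup_{\varphi}|\varphi(l)|^2$, so nothing is lost by passing to the projection).

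To evaluate $\langle\delta_l,P_{\mathcal G}\delta_l\rangle$ I would enumerate the spanning vectors of $\mathcal G$ as $v_1,\dots,v_N$, with $v_{(k,j)}=\varphi^j_{k,L}$ (each $\ell^2$-normalized), and expand $P_{\mathcal G}\delta_l=\sum_p\alpha_p v_p$; from $\langle P_{\mathcal G}\delta_l,v_q\rangle=\langle\delta_l,v_q\rangle$ the coefficients solve $G_L\alpha=\overline{w_L(l)}$, where $G_L=(\langle v_p,v_q\rangle)_{p,q}$ is the Gram matrix and $w_L(l)=(v_p(l))_{p=1}^N$. This gives $\langle\delta_l,P_{\mathcal G}\delta_l\rangle\le\|w_L(l)\|_2^2/\lambda_{\min}(G_L)$. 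The crux is a uniform-in-$L$ lower bound on $\lambda_{\min}(G_L)$: the preceding lemma shows the blocks of $G_L$ with $k\ne r$ are $O(1/|\Lambda_L|)$, while a Riemann-sum evaluation of $\langle\varphi^j_{k,L},\varphi^s_{k,L}\rangle=(K_{j,L}K_{s,L})^{-1/2}\sum_{m=-L}^{L}m^{j+s}$ shows that each diagonal block converges, as $L\to\infty$, to the Gram matrix of the $L^2([-1,1])$-normalized monomials $1,t,\dots,t^{\overline\alpha-1}$. Hence $G_L\to G_\infty$, a block-diagonal matrix which is positive definite since monomials are linearly independent, so $\lambda_{\min}(G_L)\ge\tfrac12\lambda_{\min}(G_\infty)>0$ for $L$ large. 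Combined with the elementary bound $\|w_L(l)\|_2^2=\sum_{k,j}|l|^{2j}/K_{j,L}\le C_N/|\Lambda_L|$, which follows from $K_{j,L}=\sum_{m=-L}^{L}|m|^{2j}\ge 2L^{2j+1}/(2j+1)$ for $j\ge1$ and $K_{0,L}=|\Lambda_L|$, this already yields $|\varphi(l)|^2\le C_N/|\Lambda_L|$ with $C_N$ depending only on $N$. The explicit value is obtained by the same computation after rescaling $t=l/L$: the $M$ frequency blocks decouple up to an $o(1)$ error, and $|\Lambda_L|\,\langle\delta_l,P_{\mathcal G}\delta_l\rangle$ tends, uniformly in $l$, to $M$ times the diagonal of the reproducing kernel of the polynomials of degree $\le\overline\alpha-1$ on $[-1,1]$, whose supremum one bounds directly.

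The main obstacle is precisely the uniform lower bound $\lambda_{\min}(G_L)\ge c_N>0$, i.e.\ that the explicit frame $\{\varphi^j_{k,L}\}$ stays uniformly a Riesz basis of $\mathcal G$ as $L\to\infty$; once this is in hand the remaining steps are routine estimates on power sums. I would also note that in the only place this lemma is invoked, namely Lemma \ref{lower_bound_lm2}, any bound of the form $|\varphi(l)|^2\le C_N/|\Lambda_L|$ with $C_N$ depending solely on $N$ is enough (it merely changes the constant $C_{a,N}$ there), so the precise value of the constant is inessential for the argument.
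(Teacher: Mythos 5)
Your argument is correct, and it is actually more robust than the paper's own proof. The paper proceeds by expanding a normalized $\varphi\in\mathcal G$ directly in the frame $\{\varphi^j_{k,L}\}$ and lower-bounding $1=\|\varphi\|^2$ by $\sum_{k,j}|a_{k,j}|^2$ minus cross terms, citing the preceding lemma to treat the cross terms as $O(1/|\Lambda_L|)$. But that lemma only covers $k\neq r$. The same-frequency overlaps
\[
\langle\varphi_{k,L}^j,\varphi_{k,L}^s\rangle = \frac{\sum_{m=-L}^L m^{j+s}}{(K_{j,L}K_{s,L})^{1/2}}
\]
are \emph{not} small: when $j+s$ is even and $j\neq s$ they converge, by a Riemann-sum evaluation exactly as you describe, to $\sqrt{(2j+1)(2s+1)}/(j+s+1)$, a constant of order one (e.g.\ $\sqrt5/6$ for $j=0,s=2$). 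The paper then sums over all $(k,j)\neq(r,s)$ as if the $O(1/|\Lambda_L|)$ bound applied, which is a genuine gap whenever $\overline\alpha\geq 2$. Your route — bound $|\varphi(l)|\leq\|P_{\mathcal G}\delta_l\|$, solve $G_L\alpha=\overline{w_L(l)}$, and control $\langle\delta_l,P_{\mathcal G}\delta_l\rangle\leq\|w_L(l)\|^2/\lambda_{\min}(G_L)$ — only needs $\lambda_{\min}(G_L)$ uniformly bounded below, which is exactly the right condition and which you verify by passing to the block-diagonal limit $G_\infty$ (whose blocks are the positive-definite Gram matrices of $1,t,\dots,t^{\overline\alpha-1}$ on $L^2([-1,1])$). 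In effect, the correct condition that makes either argument work is $G_L\geq c\,I$ uniformly in $L$; your proof identifies and proves precisely that, while the paper's proof implicitly assumes $G_L\to I$, which is false. The only small caveat is that your constant is $C_N$ rather than the stated $\sqrt{2N}$, but as you correctly observe, only the $N$-dependence of the constant enters Lemma~\ref{lower_bound_lm2}, so the exact value is immaterial. Also note: the statement implicitly requires $\|\varphi\|=1$; you use this (correctly) in the Cauchy--Schwarz step $|\varphi(l)|=|\langle P_{\mathcal G}\delta_l,\varphi\rangle|\leq\|P_{\mathcal G}\delta_l\|$.
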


\begin{proof}
Let $\varphi\in \mathcal G $ with $\|\varphi\|=1$. Then
$
\varphi =\displaystyle \sum_{k=0}^{n-1}\sum_{j=0}^{\overline \alpha - 1} a_{j,k} \varphi_{k,L}^j
$ for some $a_{j,k}\in\C$
and $\varphi_{k,L}^j$ given in \eqref{def:varphi_k}. We compute
\beq
\<\varphi,\wtilde V_\omega\varphi\>  = \sum_{l\in \Lambda_L} \tilde V_\omega(l) |\varphi(l)|^2
\eeq
where $\varphi(l)$ stands for the $l$th component of the vector $\varphi$. The last lemma implies that $|\<\varphi_{k,L}^j,\varphi_{r,L}^s\>| \leq \frac C {|\Lambda_L|}$ for some constant $C>0$ and all $k\neq r$. Hence, we obtain
\begin{align}
1 = \sum_{l\in \Lambda_L}|\varphi(l)|^2
& =  \sum_{k,r=0}^{n-1}\sum_{j,s=0}^{\overline\alpha - 1} \overline{a}_{k,j} a_{r,s} \<\varphi_{k,L}^j,\varphi_{r,L}^s\> \notag\\
& \geq  \sum_{k=0}^{n-1}\sum_{j=0}^{\overline\alpha - 1} |a_{k,j}|^2 - \sum_{(k,j)\neq(r,s)} |a_{k,j}||a_{r,s}| \frac C {|\Lambda_L|} \notag\\
&\geq  \sum_{k=0}^{n-1}\sum_{j=0}^{\overline\alpha - 1} |a_{k,j}|^2  - \frac 1 2\sum_{(k,j)\neq(r,s) } \big(|a_{k,j}|^2 + |a_{r,s}|^2 \big) \frac C {|\Lambda_L|} \notag\\
&\geq  \sum_{k=0}^{n-1}\sum_{j=0}^{\overline\alpha - 1} |a_{k,j}|^2 \Big( 1-  \frac {C N} {|\Lambda_L|}\Big)
\end{align}
where we used the inequality $|xy|\leq \frac 1 2(|x|^2 + |y|^2)$ for $x,y\in\R$.
We choose $L_0\in\N$ such that for all $L\geq L_0$ we obtain $\frac {C N} {|\Lambda_L|}\leq \frac 1 2$ and therefore
\beq
 \sum_{k=0}^{n-1}\sum_{j=0}^{\overline\alpha - 1} |a_{k,j}|^2 \leq 2.
\eeq
This implies
\beq\label{bound_varphi(l)}
|\varphi(l)|
\leq \frac 1 {\sqrt {|\Lambda_L|}}  \sum_{k=0}^{n-1}\sum_{j=0}^{\overline\alpha - 1} |a_{k,j}|
\leq  \frac {\sqrt N} {\sqrt {|\Lambda_L|}} \big(\sum_{k=0}^{n-1}\sum_{j=0}^{\overline\alpha - 1} |a_{k,j}|\big)^{1/2}
\leq \frac {\sqrt{2N}} {\sqrt {|\Lambda_L|}}.
\eeq
\end{proof}

\section*{Acknowledgements}
CRM acknowledges the Agence Nationale de la Recherche for their financial support via ANR grant RAW ANR-20-CE40-0012-01.
MG thanks Peter M\"uller and Jacob Shapiro.

\newcommand{\etalchar}[1]{$^{#1}$}

\end{document}